\newcommand{\bs}{\boldsymbol}
\newtheorem{theorem}{Theorem}
\newtheorem{lemma}{Lemma}
\def\BibTeX{{\rm B\kern-.05em{\sc i\kern-.025em b}\kern-.08em
    T\kern-.1667em\lower.7ex\hbox{E}\kern-.125emX}}
\newcommand{\Rmnum}[1]{\expandafter\@slowromancap\romannumeral #1@}
\begin{document}
\title{Spectral-Efficiency and Energy-Efficiency of Variable-Length XP-HARQ}
\author{Jiahui~Feng,
        Zheng~Shi,
        Yaru~Fu,
        Hong~Wang,
        Guanghua Yang,
       and Shaodan Ma
\thanks{\emph{Corresponding Author: Zheng Shi.}}
\thanks{Jiahui~Feng, Zheng Shi, and Guanghua Yang are with the School of Intelligent Systems Science and Engineering, Jinan University, Zhuhai 519070, China (e-mails: jfeng@hkmu.edu.hk; zhengshi@jnu.edu.cn; ghyang@jnu.edu.cn).}
\thanks{Yaru Fu is with the School of Science and Technology, Hong Kong Metropolitan University, Hong Kong SAR, China (e-mail: yfu@hkmu.edu.hk).}
\thanks{Hong Wang is with the School of Communication and Information Engineering, Nanjing University of Posts and Telecommunications, Nanjing 210003, China (e-mail: wanghong@njupt.edu.cn).}
\thanks{Shaodan Ma is the State Key Laboratory of Internet of Things for Smart City, University of Macau, Macau, China (e-mail: shaodanma@um.edu.mo).}
}
\maketitle
\begin{abstract}
A variable-length cross-packet hybrid automatic repeat request (VL-XP-HARQ) is proposed to boost the spectral efficiency (SE) and the energy efficiency (EE) of communication systems without the knowledge of channel state information at the transmitter (CSIT). The SE is firstly derived in terms of the outage probabilities by using renewal-reward theorem, with which the SE is proved to be upper bounded by the ergodic capacity (EC). The EC is achievable by VL-XP-HARQ if both the number of accumulated information bits and the maximum number of HARQ rounds increase to infinity. Moreover, to enable as well as facilitate the maximization of the SE, the asymptotic outage probability is obtained in a simple form under high signal-to-noise ratio (SNR). With the asymptotic results, the SE can be maximized by properly designing the number of new information bits in each HARQ round while guaranteeing the outage constraint. By applying Due to Dinkelbach's transform, the fractional programming problem can be transformed into a subtraction form, which can be further decomposed into multiple sub-problems through alternating optimization. Based on the finding that the asymptotic outage probability is an increasing and convex function of the number of information bits, the successive convex approximation (SCA) is adopted to relax each sub-problem to a convex problem. Besides, the EE of VL-XP-HARQ is also investigated. It is corroborated that the upper bound of the EE is attainable when the EC is achieved and the transmission power tends to zero. Furthermore, by aiming at maximizing the EE via power allocation while confining the outage probability within a certain constraint, the similar method to the maximization of the SE can be invoked to solve this problem. Finally, plenty of numerical results are presented for verification, wherein variable-length incremental redundancy HARQ (VL-IR-HARQ) is used for comparison.



\end{abstract}

\begin{IEEEkeywords}
Cross packet, energy efficiency, ergodic capacity, hybrid automatic repeat request, incremental redundancy, spectral efficiency.
\end{IEEEkeywords}
\IEEEpeerreviewmaketitle
\hyphenation{HARQ}
\section{Introduction}\label{sec:int}
\IEEEPARstart Many communication scenarios, such as smart grid, industrial automation, self-driving car, and unmanned aerial vehicles have entailed far more stringent demands on quality of service (QoS) for beyond fifth generation (B5G) and sixth-generation (6G) networks. 
We are therefore obliged to constantly improve reliability, reduce latency, enhance spectral efficiency (SE), and boost energy efficiency (EE) of wireless systems to confront with unprecedented challenges. Towards these goals, hybrid automatic repeat request (HARQ) plays an increasingly vital role for fulfilling reliable transmissions. 
The conventional HARQ can be classified into three types as per different encoding and decoding operations, i.e., Type-I HARQ, chase combining HARQ (CC-HARQ), and incremental redundancy HARQ (IR-HARQ) \cite{ahmed2021hybrid,shen2022energy,zhang2022performance,meng2022analysis,shi2023outage}. 



HARQ schemes can remarkably ameliorate the reception reliability, but at the price of low SE. To address this issue, the enhancement of the SE for various HARQ systems has been extensively investigated in the literature. 
In order to strengthen the SE of HARQ schemes, \cite{nasraoui2022energy} proposed a genetic algorithm based cross-layer design approach that  optimally allocates transmission powers and appropriately chooses modulation coding scheme to maximize the SE of HARQ. Besides, the Q-learning algorithm was applied to design the codeword length for HARQ systems to maximize the SE of HARQ \cite{mueadkhunthod2022design}. In \cite{szczecinski2010variable}, a variable-length IR-HARQ (VL-IR-HARQ) was proposed and found to be superior to the classical fixed-length IR-HARQ in terms of SE. In \cite{8555645}, the transmission rate in each HARQ round was optimally chosen to maximize the SE of VL-IR-HARQ over Beckmann fading channels. 
Apart from the SE, the EE of HARQ has also been received plenty of research interests particularly for future internet-of-things (IoT) applications. To maximize the EE of Type-I HARQ, the bandwidth and the power were properly allocated by considering Rician fading channels in \cite{leturc2019energy}. Furthermore, 
to account for the impact of time correlation among fading channels, the transmission powers and rate were optimized to maximize the EE of three HARQ schemes by capitalizing on Karush–Kuhn–Tucker conditions in \cite{shi2018energy}. In \cite{peng2023power}, deep reinforcement learning was utilized to minimize the long term transmission power of HARQ by dynamically selecting transmission power and coding rate.




In order to enhance SE as well as reduce latency, multi-packet HARQ was conceived to accommodate multiple packets over the same resource block in each HARQ round. So far, there are two coding strategies to realize multi-packet HARQ, including superposition coding (SC) \cite{assimi2009packet,khreis2020multi} and joint coding (JC) \cite{hausl2007hybrid,jabi2017adaptive,jabi2017boost,jabi2018amc}. The SC enabled multi-packet HARQ are essentially based on the combination of power-domain non-orthogonal multiple access (NOMA) and HARQ \cite{mheich2020performance}. Moreover, in \cite{nadeem2021nonorthogonal}, the failed message and new message were superimposed to form a non-orthogonal HARQ transmission strategy. The packet error rate is minimized while ensuring minimum requirement of the SE through the optimization of time-sharing and power-splitting ratios. In addition, reinforcement learning was utilized to provide enhanced throughput for NOMA assisted HARQ \cite{luo2023reinforcement}.
Regarding JC enabled multi-packet HARQ, multiple packets are amalgamated from code domain that brings additional coding gain \cite{el2010multiple}. A cross-packet HARQ (XP-HARQ) was proposed to jointly encode failed message and new message for retransmissions \cite{hausl2007hybrid}.
In \cite{jabi2017boost}, Jabi {\emph{et al.}} implemented XP-HARQ by using turbo codes, and a sub-optimal rate adaptation algorithm was devised to maximize the SE. 
Furthermore, Trillingsgaard {\emph{et al.}} proposed a dynamic programming based rate adaption scheme to maximize the SE of XP-HARQ in \cite{trillingsgaard2017generalized}. Besides, in \cite{liang2018efficient}, the similar method was applied to maximize the SE of XP-HARQ that is implemented by polar codes and backtrack-freezing decoding. 
To take a step further, a joint iterative decoding algorithm with low complexity was developed to improve the SE of polar-coded XP-HARQ \cite{jin2021mppp}. 
Moreover, the polar-coded XP-HARQ was integrated with multilevel coded modulation to further boost the SE in \cite{Yaoyue2023polar}.
Additionally, a special case of XP-HARQ was proposed in \cite{wang2023new}, where the new information is only introduced in the first retransmission round. The simulation results revealed that the proposed HARQ scheme is able to improve the SE. Furthermore, the deep reinforcement learning (DRL) was invoked to maximize the SE of XP-HARQ with the knowledge of outdated channel state information (CSI) \cite{wuda2023deep}. Nevertheless, most of previous works concerned XP-HARQ were based on either simulations or approximations that lacks of meaningful insights as well. To conquer this issue, \cite{feng2022outage} conducted the exact and the asymptotic analyses of outage probability for XP-HARQ from information-theoretical perspective. 
However, the current research on XP-HARQ is still in its fancy. There are still lots of practical issues to be resolved, including low complexity encoding/decoding implementation, accurate performance evaluation, adaptive resource allocation, etc. We thus fill this gap by confining our attention to XP-HARQ in this paper. 


However, the existing works assumed that the codeword lengths of XP-HARQ in all HARQ rounds keep fixed. Inspired by VL-IR-HARQ \cite{szczecinski2010variable,8555645}, this paper is concerned with an evolved version of XP-HARQ that assumes variable codeword length in each HARQ, which has yet to be reported. We refer to such XP-HARQ scheme as variable-length XP-HARQ (VL-XP-HARQ). Undoubtedly, VL-XP-HARQ is anticipated to achieve significant performance enhancement due to new degree of freedom introduced by flexible settings of blocklength.
Besides, VL-XP-HARQ makes it possible to share the same encoder/decoder across different HARQ rounds owing to its merit of adaptive adjustment of coding rate. Furthermore, due to the lacks of analytical results about XP-HARQ, most of the prior works seldom touch on the resource allocation of XP-HARQ, e.g., the maximization of SE or EE.
Therefore, this paper concentrates on the spectral efficiency (SE) and the energy efficiency (EE) of VL-XP-HARQ. The major contributions of this paper are summarized as follows.
\begin{itemize}
    \item The SE of VL-XP-HARQ is derived in terms of the outage probabilities by capitalizing on renewal-reward theorem. In addition, the relationship between SE and ergodic capacity (EC) is studied. It is deduced that the EC is achievable by VL-XP-HARQ when both the number of accumulated information bits and the maximum number of HARQ rounds approach to infinity.
    \item It is intractable to obtain a general expression of the exact outage probability for VL-XP-HARQ, which precludes its further performance evaluation and optimal design. Instead, the asymptotic outage probability is got in a simple form in the high signal-to-noise ratio (SNR). Moreover, the asymptotic outage probability is proved to be an increasing and convex function with respect to the number of information bits.
    \item In order to maximize the SE while maintaining a maximum allowable outage probability, the numbers of transmission bits in different HARQ rounds are optimally designed. However, the fractional form of the objective function and the non-convex feasible set considerably challenge the optimization. To address this issue, the asymptotic outage probability is used to conserve the computational overhead and Dinkelbach's algorithm is adopted to solve the fractional programming problem. By virtue of the convexity of the outage probability, the successive convex approximation (SCA) assisted alternating optimization is invoked to iteratively update the numbers of information bits in all HARQ rounds.
    \item Aside from the SE, the EE of VL-XP-HARQ is also thoroughly investigated. It is found that the EE of VL-XP-HARQ is upper bounded. The upper bound of the EE is reachable when the EC is achieved and the transmission power approaches to zero. Moreover, the similar approach to the maximization of the SE can be applied to solve the maximization of the EE through proper power allocation while allowing for a maximum outage tolerance.
    \item To stress the significance of the VL-XP-HARQ, the variable-length incremental redundancy HARQ (VL-IR-HARQ) is employed for comparison. It is substantiated that the VL-XP-HARQ outperforms the VL-IR-HARQ in terms of the SE and the EE due to the flexibility of the VL-XP-HARQ scheme.
\end{itemize}

The rest of this paper is organized as follows. Section II introduces the system model. The bounds and the maximization of the SE of VL-XP-HARQ are discussed in Section III. Section IV then studies the performance limit and the maximization of the EE of VL-XP-HARQ. All the analytical results are validated in Section V. At last, Section VI concludes this paper.
\section{System Model}\label{sec:sys}
In this section, the protocol and the  transmission model of VL-XP-HARQ are individually introduced.
\subsection{Protocol of VL-XP-HARQ}
As illustrated in Fig. \ref{model:eps}, in the initial HARQ round, the message ${\rm m}_1$ consisting of ${b_{1}}$ information bits is encoded into a codeword ${{\bf  x}_1}$. The codeword ${{\bf  x}_1}$ contains ${N_{1}}$ symbols drawn from a constellation set $\mathcal X $, i.e., ${{\bf  x}_1} = {\Phi }\left[ {\rm m}_1\right] \in {\mathcal{X} ^{{N_1}}}$, where $\Phi [ \cdot ]$ refers to the encoding operation. 
If the receiver succeeds to reconstruct the message, a new cycle of HARQ will be triggered by informing the transmitter with a positive acknowledgement (ACK) message. Otherwise, if the receiver fails to decode the message, a negative acknowledgement (NACK) message will be fed back to the transmitter. In order to substantially reduce the transmission latency and boost the spectral efficiency, the transmitter will combine the failed message ${\rm m}_{1}$ with a new message ${\rm m}_{2}$ containing $b_2$ information bits to construct a longer message ${{\rm m}}_{[2]}=[{\rm m}_{1},{\rm m}_{2}]$ in the subsequent HARQ round. 
The message ${{\rm m}}_{[2]}$ is then encoded as a codeword ${\bf{x}}_{2}$ composing of ${N_{2}}$ symbols, i.e., ${{\bf{x}}_2} = \Phi \left[ {{{\rm m}_{[2]}}} \right] \in {{\mathcal X} ^{{N_2}}}$. At the receiver, the observations of ${\bf{x}}_1$ and ${\bf{x}}_2$ will be concatenated for joint decoding. If the decoding is successful, the receiver will send back an ACK message and a new HARQ cycle will be immediately initiated. Otherwise, a retransmission is requested by feeding back an NACK message. In the next HARQ round, all the failed messages will be combined with some new information bits for retransmission. The cycle of HARQ will be terminated once the maximum number of transmissions, i.e., $K$, is reached or all the information bits are successfully decoded. Different from \cite{feng2022outage} that the fixed length of the codeword in each HARQ round is assumed, we consider a more general case that the length of codewords could vary with HARQ rounds. 


\begin{figure*}[htbp]
\centering
\includegraphics[width=14cm]{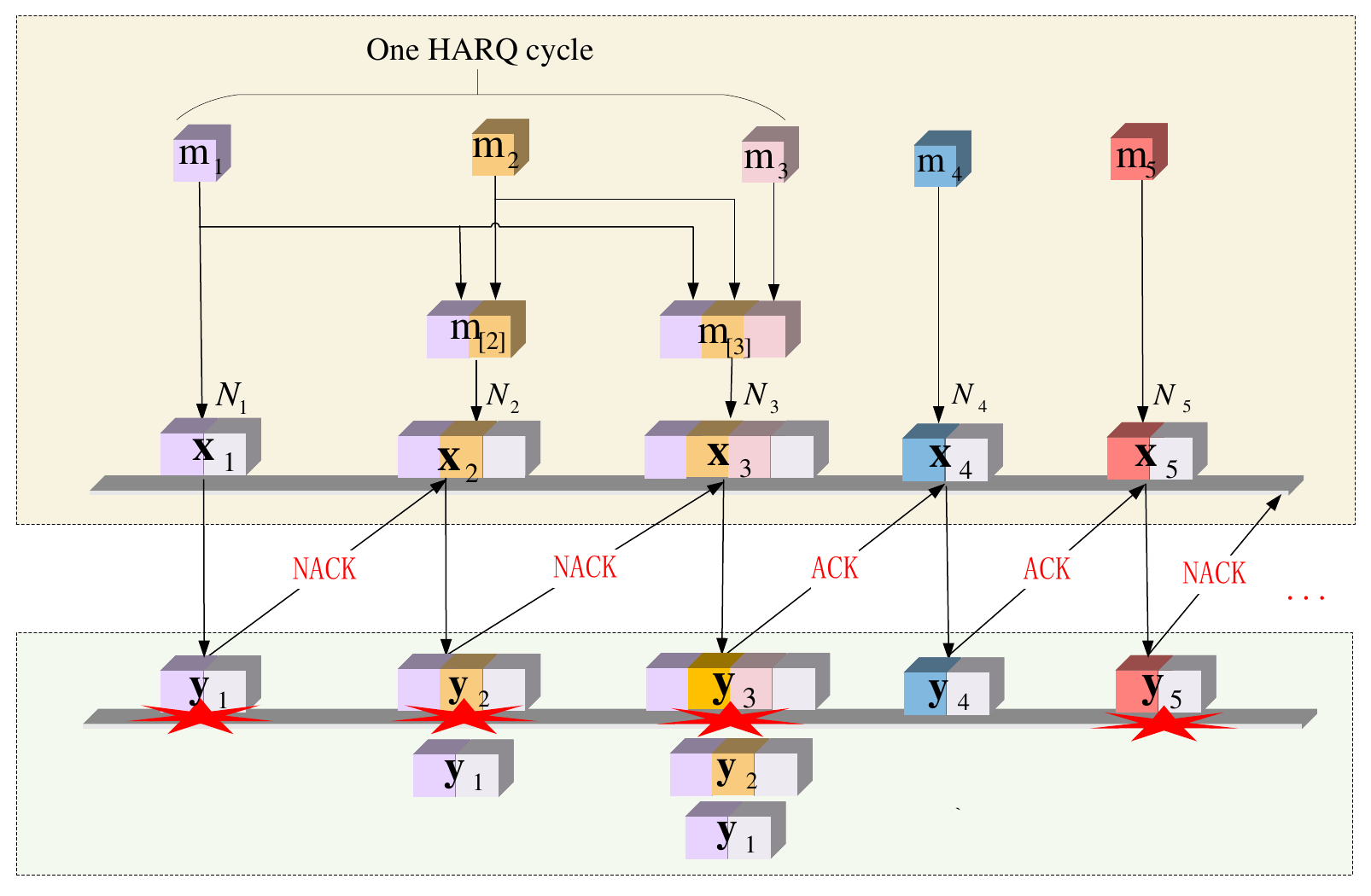}
\caption{An example of the VL-XP-HARQ scheme. 
}\label{model:eps}
\end{figure*}
\subsection{Transmission Model}
In this paper, we assume block Rayleigh fading channels. 
The received signal ${{\bf y}_k}$ in the $k$-th HARQ round is given by
\begin{align}\label{eqn:yk}
{{\bf y}_k} = \sqrt{P_k}{h_k}{{\bf x}_k} + {{\bf z}_k},
\end{align}
where ${P_{k}}$ is the average transmit power in the $k$-th round, ${{\bf z}_k}$ is the complex additive Gaussian white noise (AWGN) with zero mean and variance of ${\sigma ^2}$, ${h_{k}}$ denotes the Rayleigh channel coefficient of the $k$-th HARQ round with normalized average power, i.e., ${\mathbb E}\{ {{{| {{h_k}} |}^2}} \} = 1$, where ${\mathop{\mathbb E}\nolimits} \{\cdot\}   $ stands for the expectation operation. According to \eqref{eqn:yk}, the received signal-to-noise ratio (SNR) in the $k$-th HARQ round is written as
\begin{align}\label{snr}
{\gamma _k} = \frac{{{{\left| {{h_k}} \right|}^2}{P_k}}}{{{\sigma ^2}}}.
\end{align}

On the basis of the above transmission protocol and the channel model, the spectral efficiency and the energy efficiency of VL-XP-HARQ are thoroughly investigated in this paper.

\section{Spectral Efficiency}
In HARQ systems, the spectral efficiency (SE) is usually characterized by the long term average throughput (LTAT). Hence, by letting $t$ count the number of slots, the SE ${\mathcal{T}}_{K}$ measured in bits per second per hertz is given by \cite{caire2001throughput}
\begin{align}\label{eqn:LTAT_infinity0}
{{\cal T}_{K}} = \mathop {\lim }\limits_{t \to \infty } \frac{{{b_\Sigma }\left( t \right)}}{t} = \mathop {\lim }\limits_{t \to \infty } \frac{{\sum\nolimits_{k = 1}^t {b\left( k \right)} }}{{\sum\nolimits_{k = 1}^t {N\left( k \right)} }},
\end{align}
where $b_\Sigma(t)$, ${b\left( k \right)}$, and ${N\left( k \right)}$ denote the successfully received information bits till time slot $t$, the successfully received information bits in time slot $k$, and the length of the codeword delivered in time slot $t$, respectively. According to HARQ protocol, the event that the current cycle of HARQ stops can be recognized as a recurrent event \cite{caire2001throughput}. Since HARQ transmissions can be described as a renewal reward process, \eqref{eqn:LTAT_infinity0} can be derived based on the renewal-reward theorem as
\begin{align}\label{eqn:LTAT_infinity}
{\mathcal{T}_{K}} = \frac{{{\mathop{\mathbb E}\nolimits} \{\mathcal{R}\}}}{{{\mathop{\mathbb E}\nolimits} \{T\}}},
\end{align}
where ${{\mathop{\mathbb E}\nolimits} \{\mathcal {R}\}}$ and ${{\mathop{\mathbb E}\nolimits} \{T\}}$ denote the average number of successfully received information bits and the total average length of the codewords delivered in one HARQ cycle, respectively. 

According to the VL-XP-HARQ protocol, by defining $p_{{suc},k}$ as the probability of the successful event at the $k$-th HARQ round, 
${{\mathop{\mathbb E}\nolimits} \{\mathcal {R}\}}$ can be obtained as
\begin{align}\label{eqn:er}
    {{\mathop{\mathbb E}\nolimits} \{\mathcal {R}\}} &= \sum\nolimits_{k = 1}^K {b_k^\Sigma p_{{suc},k}}\notag\\
   & = \sum\nolimits_{k = 1}^K {b_k^\Sigma ({p_{out,k - 1}} - {p_{out,k}})}\notag\\
   & = \sum\nolimits_{k = 1}^K {b_k ({p_{out,k - 1}} - {p_{out,K}})},
\end{align}
where $b_k^\Sigma = \sum\nolimits_{l=1}^k{b_l}$, the second step holds by using $p_{{suc},k} = {p_{out,k - 1}} - {p_{out,k}}$, and ${p_{out,k }}$ stands for the outage probability after $k$ HARQ rounds. Herein, we stipulate ${p_{out,0}}=1$. Moreover, ${{\mathop{\mathbb E}\nolimits} \{T\}}$ can be obtained as
\begin{align}\label{eqn:et}
    {{\mathop{\mathbb E}\nolimits} \{T\}} 
     &= \sum\nolimits_{k = 1}^K {N_k p_{{out},k-1}}.
\end{align}


By substituting \eqref{eqn:er} and \eqref{eqn:et} into \eqref{eqn:LTAT_infinity}, the SE of VL-XP-HARQ is derived as
\begin{align}\label{eqn:LTAT_FIN}
{{\mathcal T}_{K}} = \frac{\sum\nolimits_{k = 1}^K {b_k ({p_{out,k - 1}} - {p_{out,K}})}}{{\sum\nolimits_{k = 1}^K {{N_k}} {p_{{{out}},k - 1}}}}.
\end{align}
It is easily seen that \eqref{eqn:LTAT_FIN} includes the SE of fixed-length XP-HARQ in \cite[eq. (25)]{jabi2017adaptive} as a special case by setting ${N_1} = {N_2} = \cdots  = {N_K}$. In addition, \eqref{eqn:LTAT_FIN} also incorporates the SE of the conventional HARQ-IR scheme in \cite[eq. (18)]{8555645} as a special case by setting ${b_2} = \cdots = {b_K} =0$, that is, there is no new information bits introduced into retransmissions. 
Clearly, in order to evaluate the SE in \eqref{eqn:LTAT_FIN}, 
it is necessary to derive the outage probability of VL-XP-HARQ.


According to Shannon information theory, it is proved in Appendix \ref{app_out_for} that the outage probability of VL-XP-HARQ can be obtained as
\begin{align}\label{eqn:out_for}
{p_{out,K}} = \Pr \left( {\bigcap\limits_{k = 1}^K {\sum\limits_{l = 1}^k {{I_l}}  < b_k^\Sigma} } \right),
\end{align}
where $\bigcap $ denotes the intersection operation, ${I_k} = {N_k}{\log _2}\left( {1 + {{\gamma _k}}} \right)$, and $b_k^\Sigma = \sum\nolimits_{l = 1}^k {{b_l}} $.

\subsection{Relationship Between SE and EC}
In this section, we investigate the relationship between SE and ergodic capacity (EC). According to the protocol of VL-XP-HARQ and from the information-theoretical perspective, the receiver successfully decodes the message after $\mathcal K$ HARQ rounds if and only if
\begin{align}\label{ineqn:C1}
{I_1} < {b_1} \wedge \cdots  \wedge\sum\nolimits_{k = 1}^{\mathcal K - 1} {{I_k}}  < b_{\mathcal K - 1}^\Sigma  \wedge \sum\nolimits_{k = 1}^{\mathcal K} {{I_k}}  \ge b_{\mathcal K}^\Sigma.
\end{align}
To ease analysis, we define $x_n = \gamma_n$ for $l\in [N_{n-1}+1,\sum_{l=1}^{n} N_l]$.
Thus, \eqref{ineqn:C1} can be rewritten as
\begin{subequations}
\begin{align}\label{ineqn:C2}
&\sum\nolimits_{l = 1}^{\sum\nolimits_{k = 1}^{\kappa } {{N_k}} } {{{\log }_2}\left( {1 + {x_l}} \right)}  < b_{\kappa }^\Sigma ,\,\kappa\in [1,\mathcal K-1] \\
&\sum\nolimits_{l = 1}^{\sum\nolimits_{k = 1}^{\mathcal K} {{N_k}} } {{{\log }_2}\left( {1 + {x_l}} \right)}  \ge b_{\mathcal K}^\Sigma
\end{align}
\end{subequations}

The required number of HARQ rounds in one cycle, i.e., $\mathcal K$, is a random variable whose probability mass function (pmf) is given by
\begin{equation}
    \Pr(\mathcal K = k) = \left\{ {\begin{array}{*{20}{c}}
{{p_{out,k - 1}} - {p_{out,k}}}&{k < K}\\
{{p_{out,K - 1}}}&{k = K}
\end{array}} \right..
\end{equation}
If we assume that all the messages can be almost surely successfully recovered within $K$ transmissions, i.e., $p_{out,K}\to 0$, taking the expectation of \eqref{ineqn:C2} with regard to $x_l$ and $\mathcal K$ leads to
\begin{subequations}
\begin{align}\label{ineqn:EXP}
&{\mathbb E}\left\{\sum\nolimits_{l = 1}^{\sum\nolimits_{k = 1}^{\kappa } {{N_k}} } {{{\log }_2}\left( {1 + {x_l}} \right)}\right\}  < {\mathbb E}\left\{b_{\kappa }^\Sigma \right\},\,\kappa\in [1,\mathcal K-1] \\
&{\mathbb E}\left\{\sum\nolimits_{l = 1}^{\sum\nolimits_{k = 1}^{\mathcal K} {{N_k}} } {{{\log }_2}\left( {1 + {x_l}} \right)}\right\}  \ge {\mathbb E}\left\{b_{\mathcal K}^\Sigma \right\}
\end{align}
\end{subequations}
For simplicity, we assume equal power allocation scheme for VL-XP-HARQ, i.e., $P_1=\cdots=P_K$. Moreover, by ignoring the dependence between $x_l$ and $\mathcal K$, we have
\begin{subequations}\label{EXP}
\begin{align}
\bar C {\mathbb E}\left\{ {\sum\nolimits_{k = 1}^{\mathcal K - 1} {{N_k}} } \right\} &< {\mathbb E}\left\{ b_{\mathcal K - 1}^\Sigma  \right\},\\
\bar C {\mathbb E}\left\{ {\sum\nolimits_{k = 1}^{\mathcal K} {{N_k}} } \right\} &\ge {\mathbb E}\left\{b_{\mathcal K}^\Sigma  \right\},
\end{align}
\end{subequations}
where $\bar C = {\mathbb E}\left\{ {{{\log }_2}\left( {1 + {x_l}} \right)} \right\} $ denotes the ergodic capacity (EC). By noticing that ${\mathbb E}\left\{ {\sum\nolimits_{k = 1}^{\mathcal K} {{N_k}} } \right\} = \mathbb E \{T\}$ represents the average number of symbols within $\mathcal K$ HARQ rounds and ${\mathbb E}\left\{ b_{\mathcal K}^\Sigma  \right\}= \mathbb E\{\mathcal R\} + p_{out,K}b_K^\Sigma\to \mathbb E\{\mathcal R\}$ stands for the average number of delivered information bits, 
\eqref{EXP} can be rewritten as
\begin{subequations}\label{ineqn:ergodic}
\begin{align}
\left( {\mathbb E\{ T \} - \mathbb E\left\{ {{N_{\cal K}}} \right\}} \right)\bar C &< \mathbb E\{\mathcal R\} - \mathbb E\left\{ {{b_{\cal K}}} \right\}, \\
\mathbb E\{ T \}\bar C &\ge \mathbb E\{\mathcal R\},
\end{align}
\end{subequations}
where $\mathbb E\left\{ {{N_{\cal K}}} \right\}$ and $\mathbb E\left\{ {{b_{\cal K}}} \right\}$ refer to the average numbers of symbols and information bits in the last HARQ round, respectively.
On the basis of \eqref{ineqn:ergodic} along with \eqref{eqn:LTAT_infinity}, we arrive at
\begin{align}\label{ineqn:EAC}
{\mathcal T_K} \le \bar C < {\mathcal T_K}\left( {1 - \frac{{\mathbb E\left\{ {{b_{\mathcal K}}} \right\}}}{{\mathbb E\{ {\cal R}\} }}} \right){\left( {1 - \frac{{\mathbb E\left\{ {{N_{\mathcal K}}} \right\}}}{{\mathbb E\{ T\} }}} \right)^{ - 1}}.
\end{align}

It is clear from \eqref{ineqn:EAC} that the SE is less than the EC. In order to achieve the EC, i.e., $\mathcal T_K = \bar C$, according to the squeeze theorem, \eqref{ineqn:EAC} implies that
\begin{align}\label{eq:limit}
    {\frac{{\mathbb E\left\{ {{b_{\mathcal K}}} \right\}}}{{\mathbb E\{ {\cal R}\} }}} = {\frac{{\mathbb E\left\{ {{N_{\mathcal K}}} \right\}}}{{\mathbb E\{ T\} }}} = 0.
\end{align}
Moreover, since \eqref{eq:limit} is based on the assumption of $p_{out,K}\to 0$, such a lossless HARQ requires $K\to \infty$ \cite{larsson2014throughput}. Hence, in order to achieve the EC, as $K\to \infty$, one has
\begin{align}\label{eq:limitfur}
    {\frac{{\mathbb E\left\{ {{b_{\mathcal K}}} \right\}}}{{\mathbb E\{ {\cal R}\} }}} \to 0,  {\frac{{\mathbb E\left\{ {{N_{\mathcal K}}} \right\}}}{{\mathbb E\{ T\} }}} \to 0,~{\rm as }~ K\to \infty.
\end{align}


By noticing that ${\mathbb E\left\{ {{N_{\mathcal K}}} \right\}}$ are bounded with the increase of $K$, i.e., $\min\{ {N_k:k\in [1,K]}\} \le {\mathbb E\left\{ {{N_{\mathcal K}}} \right\}} = \sum\nolimits_{k = 1}^{K - 1} {{N_k}\left( {{p_{out,k - 1}} - {p_{out,k}}} \right)}  + {N_K}{p_{out,K - 1}}\le \max\{N_k:k\in [1,K]\}$, it follows from \eqref{eq:limitfur} that ${\mathbb E\left\{ T \right\}}$ should approach to infinity as $K$ increases, i.e., ${\mathbb E\left\{ T \right\}}\to \infty$. To ensure this condition, we can set $\inf\nolimits_{k\le K-1}\{ p_{out,k}\}>0$ as $K\to \infty$ because of ${{\mathop{\mathbb E}\nolimits} \{T\}} > \inf\nolimits_{k\le K-1}\{ p_{out,k}\}\sum\nolimits_{k = 1}^K {N_k }$, where $\inf$ corresponds to the infimum. Similarly, ${\mathbb E\left\{ {{b_{\mathcal K}}} \right\}}$ is also bounded as $ \min\{b_k:k\in [1,K]\} \le{\mathbb E\left\{ {{b_{\mathcal K}}} \right\}} \le \max\{b_k:k\in [1,K]\}$, which indicates $\mathbb E\{\mathcal R\}\to \infty$ as $K$ increases by capitalizing on \eqref{eq:limitfur}. To guarantee this condition, we can also let $\inf\nolimits_{k\le K-1}\{ p_{out,k}\}>0$, which results in $\mathbb E\{\mathcal R\}\to \infty$ according to \eqref{eqn:er}.

In summary, in order to attain the ergodic capacity, the outage probability of VL-XP-HARQ can be set as $\sup_{K>0}\inf\nolimits_{k\le K-1}\{ p_{out,k}\}>0$ and $p_{out,K}\to 0$ as $K$ tends to infinity, where $\sup $ stands for the supremum. To meet such conditions, we reach the conclusion that $b_K^\Sigma\to \infty$ as $K$ increases, while $b_1,\cdots,b_K$ are bounded.

\subsection{Maximization of SE}\label{sec:max_se}
In the last subsection, it is proved that the ergodic capacity is achievable by VL-XP-HARQ. Towards this end, the number of new information bits introduced into each HARQ round, i.e., ${b_1}, \cdots ,{b_K}$, should be well designed to maximize the SE while maintaining the outage constraint, 
i.e., ${{p_{out,K}} \le \varepsilon }$, where $\varepsilon$ denotes the maximum allowable outage probability. The maximization of the SE is therefore formulated as
\begin{equation}\label{MAX_LTAT}
\begin{array}{*{20}{c}}
{\mathop {\max }\limits_{\bf b} }&{{{\rm{{\cal T}}}_{{{ K}}}}}\\
{\rm s.t.}&{{p_{out,K}} \le \varepsilon },
\end{array}
\end{equation}
where ${\bf b} = (b_1,\cdots,b_K)$. To solve the problem of \eqref{MAX_LTAT}, the closed-form expression of the outage probability ${{p_{out,k}}}$ needs to be obtained. However, it is hardly possible to derive a general expression for ${{p_{out,k}}}$ except for the cases of $k\le 2$, as given below.
\begin{theorem}\label{THEOREM ACCK2}
If $k=1$, the outage probability ${{p_{out,k}}}$ can be calculated as
\begin{equation}\label{eqn:out_k_1_exa}
    {p_{out,1}} = 1 - {e^{ - \left( {{2^{{b_1}/{N_1}}} - 1} \right){\sigma ^2}/{P_1}}}.
\end{equation}
If $k=2$, ${{p_{out,k}}}$ is obtained as \eqref{outage_k2}, as shown at the top of the next page,
\begin{figure*}
\begin{align}\label{outage_k2}
{p_{out,2}} =& {e^{ - \left( {{2^{{b_2}/{N_2}}} - 1} \right){\sigma ^2}/{P_2}}} - {e^{ - \left( {{2^{\left( {{b_1} + {b_2}} \right)/{N_2}}} - 1} \right){\sigma ^2}/{P_2}}}
 + \left( {1 - {e^{ - \left( {{2^{{b_1}/{N_1}}} - 1} \right){\sigma ^2}/{P_1}}}} \right)\left( {1 - {e^{ - \left( {{2^{{b_2}/{N_2}}} - 1} \right){\sigma ^2}/{P_2}}}} \right)\notag\\
& - {e^{{\sigma ^2}/{P_1} + {\sigma ^2}/{P_2}}}H_{1,1}^{1,1}\left( {{2^{\left( {{b_1} + {b_2}} \right)/{N_1}}}\frac{{{\sigma ^2}}}{{{P_1}}}{{\left( {\frac{{{\sigma ^2}}}{{{P_2}}}} \right)}^{{N_2}/{N_1}}}\left| {\begin{array}{*{20}{c}}
{0, - \frac{{{N_2}}}{{{N_1}}},\frac{{{\sigma ^2}}}{{{P_2}}}{2^{{b_2}/{N_2}}}}\\
{0,1,0}
\end{array}} \right.} \right)\notag\\
& + {e^{{\sigma ^2}/{P_1} + {\sigma ^2}/{P_2}}}H_{1,1}^{1,1}\left( {{2^{\left( {{b_1} + {b_2}} \right)/{N_1}}}\frac{{{\sigma ^2}}}{{{P_1}}}{{\left( {\frac{{{\sigma ^2}}}{{{P_2}}}} \right)}^{{N_2}/{N_1}}}\left| {\begin{array}{*{20}{c}}
{0, - \frac{{{N_2}}}{{{N_1}}},\frac{{{\sigma ^2}}}{{{P_2}}}{2^{\left( {{b_1} + {b_2}} \right)/{N_2}}}}\\
{0,1,0}
\end{array}} \right.} \right).
\end{align}
\hrulefill
\end{figure*}
where $H_{p,q}^{m,n}(  \cdot  )$ is the generalized upper incomplete Fox's H function\cite{yilmaz2009productshifted}, $m$, $n$, $p$, and $q$ are integers such that $0 \le m \le q$ and $0 \le n \le p$.
\end{theorem}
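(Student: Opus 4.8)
The plan is to evaluate \eqref{eqn:out_for} directly for $K\in\{1,2\}$, exploiting that the per-round SNRs $\gamma_1,\gamma_2$ are independent with $\gamma_k$ exponentially distributed of rate $a_k:=\sigma^2/P_k$ (since $|h_k|^2$ is unit-mean exponential and $\gamma_k=|h_k|^2/a_k$, so $f_{\gamma_k}(x)=a_k e^{-a_k x}$). For $k=1$ the event in \eqref{eqn:out_for} reduces to $I_1<b_1$, i.e. $\gamma_1<2^{b_1/N_1}-1$, and \eqref{eqn:out_k_1_exa} follows immediately from the exponential CDF.

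For $k=2$ I would rewrite the two simultaneous constraints $I_1<b_1$ and $I_1+I_2<b_1+b_2$ as $\gamma_1<2^{b_1/N_1}-1$ together with $(1+\gamma_1)^{N_1}(1+\gamma_2)^{N_2}<2^{b_1+b_2}$. Conditioning on $\gamma_1$ and integrating out $\gamma_2$ first, the second constraint reads $\gamma_2<g(\gamma_1)$ with $g(\gamma_1)=2^{(b_1+b_2)/N_2}(1+\gamma_1)^{-N_1/N_2}-1$. A preliminary step is to verify that $g(\gamma_1)>0$ on the whole range $\gamma_1\in[0,2^{b_1/N_1}-1)$ (which holds because $2^{b_1/N_1}-1<2^{(b_1+b_2)/N_1}-1$), so the inner $\gamma_2$-integral equals $1-e^{a_2}e^{-a_2 2^{(b_1+b_2)/N_2}(1+\gamma_1)^{-N_1/N_2}}$, collapsing $p_{out,2}$ to a single integral in $\gamma_1$.

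Splitting that integral, the contribution of the constant $1$ reproduces $p_{out,1}=1-e^{-a_1(2^{b_1/N_1}-1)}$, while the remaining piece, after the substitution $u=1+\gamma_1$, becomes $a_1 e^{a_1+a_2}\int_1^{2^{b_1/N_1}}e^{-a_1 u-c u^{-\rho}}\,du$ with $c=a_2 2^{(b_1+b_2)/N_2}$ and $\rho=N_1/N_2$. One integration by parts on this integral peels off the boundary contributions at $u=1$ and $u=2^{b_1/N_1}$; combined with the $p_{out,1}$ term these assemble exactly into the elementary terms $e^{-(2^{b_2/N_2}-1)a_2}-e^{-(2^{(b_1+b_2)/N_2}-1)a_2}+(1-e^{-(2^{b_1/N_1}-1)a_1})(1-e^{-(2^{b_2/N_2}-1)a_2})$ of \eqref{outage_k2}, and leave behind the reduced integral $\int_1^{2^{b_1/N_1}}u^{-\rho-1}e^{-a_1 u-c u^{-\rho}}\,du$.

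The hard part is this last integral, which has no elementary closed form. The plan is to insert the Mellin--Barnes representation $e^{-a_1 u}=\frac{1}{2\pi i}\int_{\mathcal L}\Gamma(t)(a_1 u)^{-t}\,dt$, interchange the order of integration, and evaluate the resulting $u$-integral over the \emph{finite} interval $[1,2^{b_1/N_1}]$ via the substitution $w=c u^{-\rho}$. Because the range is finite, this inner integral yields a difference of two upper incomplete gamma functions $\Gamma(1+t/\rho,\cdot)$ taken at the endpoints $w=c$ and $w=c\,(2^{b_1/N_1})^{-\rho}=\frac{\sigma^2}{P_2}2^{b_2/N_2}$, which is precisely the feature that forces the incomplete generalization of Fox's H rather than the ordinary one. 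Identifying each remaining contour integral with the definition of the generalized upper incomplete Fox's H function of \cite{yilmaz2009productshifted} produces the two $H^{1,1}_{1,1}(\cdot)$ terms, with common argument $2^{(b_1+b_2)/N_1}\frac{\sigma^2}{P_1}\bigl(\frac{\sigma^2}{P_2}\bigr)^{N_2/N_1}$ and incomplete parameters $\frac{\sigma^2}{P_2}2^{b_2/N_2}$ and $\frac{\sigma^2}{P_2}2^{(b_1+b_2)/N_2}$, as claimed. I expect the two main obstacles to be (i) justifying the interchange of the Mellin--Barnes contour with the finite-range integral and the convergence of $\mathcal L$, and (ii) matching the incomplete-gamma contour integrals term-by-term to the exact parameter convention of the generalized upper incomplete Fox's H function.
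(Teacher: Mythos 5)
Your proposal is correct and follows essentially the same route as the paper: reduce $p_{out,2}$ to a one-dimensional integral of $e^{-(\text{linear})}e^{-(\text{power law})}$ over a finite range, insert the Mellin--Barnes representation of one exponential factor, obtain a difference of upper incomplete gamma functions $\Gamma(1+\tfrac{N_2}{N_1}s,\cdot)$ at the two endpoints, and identify the resulting contour integrals with the generalized upper incomplete Fox's H function. The only (immaterial) difference is that you condition on $\gamma_1$ and integrate out $\gamma_2$ first, extracting the elementary terms by an integration by parts, whereas the paper conditions on $\gamma_2$, integrates out $|h_1|^2$ via its CDF, and obtains the same elementary terms from the case split on $\gamma_{\rm th}=\min\{\cdot,\cdot\}$; the two leftover integrals are related by the change of variables $v=2^{(b_1+b_2)/N_2}u^{-N_1/N_2}$.
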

\begin{proof}
Please see Appendix \ref{acc_k2}.
\end{proof}

Although the outage probability $p_{out,k}$ in the cases of $k=1, 2$ can be derived in closed-form, the complex form of the exact outage expression still precludes the optimization of the problem \eqref{MAX_LTAT}. To address this issue,
the asymptotic expression of the outage probability in the high SNR regime, i.e., ${P_1/\sigma^2} , \cdots  , {P_K/\sigma^2}\to \infty$, is obtained to ease the optimization of \eqref{MAX_LTAT}, as shown in the following theorem.

\begin{theorem}\label{THEOREM ASY}
The asymptotic outage probabilities in the high SNR regime for $K=1,\,2$ can be obtained as
\begin{equation}\label{eqn:k_1_asy}
    {p_{out,1}}  \simeq \frac{\sigma ^2}{P_1}  ( {{2^{\frac{b_1}{N_1}}} - 1} ),
\end{equation}
\begin{align}\label{eqn:asy_2}
{p_{out,2}} & \simeq  \frac{{{\sigma ^4}}}{{{P_1}{P_2}}}( {\frac{{{N_2}}}{{{N_2} - {N_1}}}( {{2^{\frac{{{b_1}}}{{{N_1}}} + \frac{{{b_2}}}{{{N_2}}}}} - {2^{\frac{{{b_1} + {b_2}}}{{{N_2}}}}}} ) - {2^{\frac{{{b_1}}}{{{N_1}}}}} + 1} ),
\end{align}
where $N_1\ne N_2$. Moreover, the asymptotic outage probability ${p_{out,K}}$ for $K>2$ is given by 
\begin{align}\label{eqn:outage_asy_mul1}
{p_{out,K}} \simeq \prod\limits_{k = 1}^K {\frac{{{\sigma ^2}}}{{{N_k}{P_k}}}} {{h_{K,1}}({x_0})},
\end{align}
where $x_0=1$ and ${\hbar _{K,k}}\left( {{x_{k - 1}}} \right)$ can be calculated as
\begin{align}\label{eqn:out_h_def}
{\hbar_{K,k}}({x_{k - 1}}) 
& = {\left( { - 1} \right)^{K - k + 1}} \prod\limits_{i = 0}^{K-k} {{N_{K-k}}} {x_{k - 1}}^{\frac{1}{{{N_k}}}}  \notag\\
&+ \sum\limits_{i = 0}^{K - k} {{c_{k,i}}{x_{k - 1}}^{\frac{1}{{{N_k}}} - \frac{1}{{{N_{k + i}}}}}},
\end{align}
where ${c_{k,i}}$ can be recursively obtained as
\begin{equation}
    {c_{k,i}} =  - \frac{{{N_{k + i}}{N_k}}}{{{N_{k + i}} - {N_k}}}{c_{k + 1,i - 1}},i \in [1, K - k],
\end{equation}
\begin{align}
    {c_{k,0}}& = {\left( { - 1} \right)^{K - k}}\prod\nolimits_{i = 0}^{K - k } {{N_{K - i}}}{2^{{{b_k^\Sigma }}/{{{N_k}}}}}   \notag
    \\
    &+\sum\nolimits_{i = 1}^{K - k} {{c_{k + 1,i-1 }}\frac{{{N_{k + i}}{N_k}}}{{{N_{k + i}} - {N_k}}}{2^{{{b_k^\Sigma }}/{{{N_k}}} - {{b_k^\Sigma }}/{{{N_{k + i}}}}}}},
\end{align}
and the values of $N_1,\cdots,N_K$ are not identical. In particular, ${\hbar_{K,K}}({x_{K - 1}}) = {N_K}( {{2^{{{b_K^\Sigma }}/{{{N_K}}}}} - {x_{K - 1}}^{{1}/{{{N_K}}}}})$ and $ {c_{K,0}} = {N_K}{2^{{{b_K^\Sigma }}/{{{N_K}}}}}$.
\begin{proof}
Please see Appendix \ref{asy}.
\end{proof}
\end{theorem}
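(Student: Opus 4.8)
The plan is to begin from the exact outage event \eqref{eqn:out_for} and use that, under Rayleigh fading, $\gamma_1,\dots,\gamma_K$ are independent and exponentially distributed with rates $\lambda_k=\sigma^2/P_k$. This lets me write $p_{out,K}=\int_{\mathcal R}\prod_{k=1}^K\lambda_k e^{-\lambda_k\gamma_k}\,d\gamma_k$ with $\mathcal R=\{\gamma\ge 0:\sum_{l=1}^k N_l\log_2(1+\gamma_l)<b_k^\Sigma,\ \forall k\}$. The key structural fact is that $\mathcal R$ is \emph{bounded}, since the $k$-th constraint already forces $\gamma_k<2^{b_k^\Sigma/N_k}-1$. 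Consequently, as every $P_k/\sigma^2\to\infty$ the exponential factor $e^{-\sum_k\lambda_k\gamma_k}\to 1$ uniformly on $\mathcal R$, and dominated convergence yields the leading-order form $p_{out,K}\simeq\big(\prod_{k=1}^K\sigma^2/P_k\big)\,V_K$, where $V_K=\mathrm{Vol}(\mathcal R)$. The cases $K=1,2$ in \eqref{eqn:k_1_asy}--\eqref{eqn:asy_2} then drop out by evaluating $V_1$ and $V_2$ directly (or, equivalently, by expanding the closed forms of Theorem \ref{THEOREM ACCK2} to first order), so the real task is to reduce $V_K$ to the recursive form \eqref{eqn:outage_asy_mul1}.

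To evaluate $V_K$ I would apply the cumulative substitution $x_k=\prod_{l=1}^k(1+\gamma_l)^{N_l}$, $x_0=1$. This linearizes all constraints at once into $x_{k-1}<x_k<2^{b_k^\Sigma}$ and renders the map $\gamma\mapsto x$ triangular, with Jacobian $\prod_k N_k\,x_k/(1+\gamma_k)$. Substituting $1+\gamma_k=(x_k/x_{k-1})^{1/N_k}$ and collecting powers gives \[ V_K=\frac{1}{\prod_{k=1}^K N_k}\int_{1}^{2^{b_1^\Sigma}}\!\!x_1^{-1+\frac1{N_1}-\frac1{N_2}}\cdots\int_{x_{K-1}}^{2^{b_K^\Sigma}}\!\!x_K^{-1+\frac1{N_K}}\,dx_K\cdots dx_1, \] which already produces the prefactor $\prod_k\sigma^2/(N_kP_k)$ of \eqref{eqn:outage_asy_mul1}. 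I then define $h_{K,k}(x_{k-1})$ to be the inner $(K-k+1)$-fold integral over $x_k,\dots,x_K$ viewed as a function of its lower limit $x_{k-1}$, so that the quantity sought is $h_{K,1}(x_0)=h_{K,1}(1)$ and the base case is $h_{K,K}(x_{K-1})=\int_{x_{K-1}}^{2^{b_K^\Sigma}}x_K^{1/N_K-1}\,dx_K=N_K\big(2^{b_K^\Sigma/N_K}-x_{K-1}^{1/N_K}\big)$, matching the stated terminal condition and $c_{K,0}=N_K2^{b_K^\Sigma/N_K}$.

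The heart of the argument is a downward induction on $k$ through the recursion $h_{K,k}(x_{k-1})=\int_{x_{k-1}}^{2^{b_k^\Sigma}}x_k^{-1+\frac1{N_k}-\frac1{N_{k+1}}}h_{K,k+1}(x_k)\,dx_k$. Assuming $h_{K,k+1}$ is a finite sum of powers of $x_k$, each monomial $c_{k+1,i-1}x_k^{1/N_{k+1}-1/N_{k+i}}$ becomes, after multiplication by $x_k^{-1+1/N_k-1/N_{k+1}}$ and integration, a term of the form $\tfrac1\beta\big(2^{b_k^\Sigma\beta}-x_{k-1}^{\beta}\big)$ with $\beta=\tfrac1{N_k}-\tfrac1{N_{k+i}}$. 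The $x_{k-1}^{\beta}$ piece carries coefficient $-\tfrac1\beta c_{k+1,i-1}=-\tfrac{N_kN_{k+i}}{N_{k+i}-N_k}c_{k+1,i-1}$, which is precisely the stated recursion for $c_{k,i}$, whereas the constant pieces $\tfrac1\beta 2^{b_k^\Sigma\beta}$ (together with the integral of the leading monomial of $h_{K,k+1}$) aggregate into $c_{k,0}$, reproducing its two-part expression; meanwhile the leading monomial $(-1)^{K-k+1}\big(\prod_{i=0}^{K-k}N_{K-i}\big)x_{k-1}^{1/N_k}$ regenerates itself with the correct sign and product of blocklengths. I expect the main obstacle to be bookkeeping rather than analysis: carefully sorting which monomials of $h_{K,k+1}$ feed the constant coefficient $c_{k,0}$ versus the higher $c_{k,i}$, and checking the sign pattern and the telescoping of the $N$-products. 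Finally, the distinctness hypothesis $N_i\neq N_j$ is indispensable, because it guarantees $\beta\neq 0$ so that each integral returns a power rather than a logarithm; the coincident cases (such as $N_1=N_2$ excluded in \eqref{eqn:asy_2}) would have to be recovered separately as limits.
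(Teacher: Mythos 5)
Your proposal is correct, and for the general case $K>2$ it follows essentially the paper's own route: the cumulative substitution $x_k=\prod_{l=1}^k(1+\gamma_l)^{N_l}$, the resulting iterated integral with prefactor $\prod_k \sigma^2/(N_kP_k)$, and the downward induction on $\hbar_{K,k}(x_{k-1})=\int_{x_{k-1}}^{2^{b_k^\Sigma}}x_k^{1/N_k-1/N_{k+1}-1}\hbar_{K,k+1}(x_k)\,dx_k$, with the monomial-by-monomial bookkeeping that yields the recursions for $c_{k,i}$ and $c_{k,0}$ exactly as in Appendix C. Where you genuinely diverge is in two places, both to your advantage. First, your justification of the leading-order step — the outage region in $\gamma$-space is bounded because the $k$-th constraint alone forces $\gamma_k<2^{b_k^\Sigma/N_k}-1$, so $e^{-\sum_k\lambda_k\gamma_k}\to 1$ uniformly and the outage probability is asymptotically $\bigl(\prod_k\sigma^2/P_k\bigr)\cdot\mathrm{Vol}(\mathcal R)$ — is cleaner and more rigorous than the paper's procedure of Taylor-expanding each exponential inside the multifold integral and discarding all terms except $n_1=\cdots=n_K=0$. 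Second, for $K=2$ the paper does \emph{not} use the volume argument: it starts from the exact closed form \eqref{outage_k2} in terms of the generalized upper incomplete Fox's H function, applies the residue theorem keeping the poles at $s=0$ and $s=-1$, and invokes the small-argument asymptotics of $\Gamma(1-N_2/N_1,x)$ (which requires a case split on $N_2/N_1\gtrless 1$) before recombining terms into \eqref{eqn:asy_2}. Your direct evaluation of $V_2=\frac{1}{N_1N_2}\hbar_{2,1}(1)$ reproduces \eqref{eqn:asy_2} with far less machinery and has the additional merit of unifying $K=2$ with the general-$K$ formula, showing that \eqref{eqn:outage_asy_mul1} is in fact valid for all $K\ge 2$; the paper's heavier route buys nothing here beyond consistency with its exact Theorem~\ref{THEOREM ACCK2}. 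Your closing remark that pairwise distinctness of the $N_k$ is needed precisely so that each exponent $\beta=1/N_k-1/N_{k+i}$ is nonzero (power rather than logarithm) correctly identifies the role of the hypothesis.
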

In the following, the simple form of the asymptotic outage probability in Theorem \ref{THEOREM ASY} is used to replace the exact outage probability $p_{out,k}$ in \eqref{MAX_LTAT} to reduce the computational complexity of the optimization. It is worth mentioning that the optimal solution to this relax problem will approach to the true optimal solution at high SNR. This is due to the fact that the asymptotic outage probability becomes more accurate as the SNR increases.

By noticing that problem \eqref{MAX_LTAT} is a fractional programming problem, the Dinkelbach's transform can be adopted \cite{shen2018fractional}. Thus the optimization problem \eqref{MAX_LTAT} can be converted into
\begin{equation}\label{LTAT_MAX_FUN1}
\begin{array}{*{20}{c}}
{\mathop {\max }\limits_{\mathbf b} }&{\tilde {\mathcal R}\left( {{\bf{b}}} \right) - \alpha {\tilde  T}\left( {{\bf{b}}} \right)}\\
{\rm s.t.}&{{\tilde p_{out,K}}\left( {{\bf{b}}} \right) \le \varepsilon }
\end{array},
\end{equation}
where $\tilde{\mathcal R}( {\bf{b}} ) = \sum\nolimits_{k = 1}^K {b_k}( {\tilde p_{out,k - 1}}( {\bf{b}} ) - {\tilde p_{out,K}}( {\bf{b}} ) ) $, $\tilde T\left( {\bf{b}} \right) = \sum\nolimits_{k = 1}^K {{N_k}{\tilde p_{out,k - 1}}\left( {\bf{b}} \right)} $, ${\tilde p_{out,k}}(\mathbf b)$ corresponds to the asymptotic outage probability of VL-XP-HARQ with the numbers of newly introduced information bits set as $\bf b$ and ${\tilde p_{out,k}}(\mathbf b)$ can be evaluated through Theorem \ref{THEOREM ASY}. Moreover, $\alpha$ is the introduced auxiliary variable, which can be iteratively updated as
\begin{align}\label{eta_n_R}
\alpha^{(n+1)} = \frac{{\tilde{\mathcal R}\left( {{\bf{b}}^{\left( {{{n}}} \right)}} \right)}}{{{\tilde T}\left( {{\bf{b}}^{\left( {{{n}}} \right)}} \right)}},
\end{align}
where $n$ the iteration index and we denote by ${\bf b}^{(0)}$ the initial value of $\bf b$ for the iteration algorithm. Unfortunately, the problem of \eqref{LTAT_MAX_FUN1} is still non-convex. To solve this problem, the alternating optimization \cite{J.C.Bezdek2022} is invoked. 
In particular, the numbers of information bits for different HARQ rounds are optimized one by one. Hence, \eqref{LTAT_MAX_FUN1} is decomposed into $K$ single-variable maximization sub-problems. In each inner iteration, the optimization problem can be casted as
\begin{equation}\label{LTAT_MAX_FUN1}
\begin{array}{*{20}{l}}
{\mathop {\max }\limits_{b_r} }&{\tilde{\mathcal R}( {{\bf{b}}_{{r}}^{\left( {{{n}} + {{1}}} \right)}} ) - \alpha^{(n + 1)}{\tilde T}( {{\bf{b}}_{{r}}^{\left( {{{n}} + {{1}}} \right)}} )}\\
{\rm s.t.}&{{\tilde p_{out,K}}( {{\bf{b}}_{{r}}^{\left( {{{n}} + {{1}}} \right)}} ) \le \varepsilon }
\end{array},
\end{equation}
where $r \in \left[ {1,K} \right]$ and 
${\bf{b}}_{r}^{\left( {{{n}} + {{1}}} \right)} = ( b_1^{( {n + 1} )}, \cdots , b_{r - 1}^{( {n + 1} )}, {b_r},b_{r + 1}^{( n )}, \cdots , b_K^{( n )} )$.
However, it is still tremendously challenging to solve \eqref{LTAT_MAX_FUN1} that is an integer programming problem. Nevertheless, by relaxing the feasible region of $b_r$ to a real number, it is found that the outage probability $p_{out,k}$ is a convex function with respect to $b_r$ given the fixed values of $b_1,\cdots,b_{r-1},b_{r+1},\cdots, b_K$, as proved in the following theorem.
\begin{theorem}\label{THEOREM_III}
The asymptotic outage probability ${{\tilde p_{out,K}}}({{\bf{b}}_{r}^{\left( {{{n+1}}} \right)}})$ is an increasing and convex function of ${b_{r}}\in \mathbb R^+$ provided that the numbers of information bits in other HARQ rounds, i.e., $b_1^{( {n + 1} )}, \cdots , b_{r - 1}^{( {n + 1} )}, b_{r + 1}^{( n )}, \cdots , b_K^{( n )} $.
 \end{theorem}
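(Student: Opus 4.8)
The plan is to bypass the unwieldy closed form of Theorem~\ref{THEOREM ASY} and instead argue from the integral (volume) representation that underlies it. Starting from \eqref{eqn:out_for} and substituting $v_l=\log_2(1+\gamma_l)$ (so $\gamma_l=2^{v_l}-1$), the density of each $\gamma_l$ contributes a factor $\tfrac{\sigma^2}{P_l}e^{-(2^{v_l}-1)\sigma^2/P_l}\ln 2\cdot 2^{v_l}$; in the high-SNR regime $P_l/\sigma^2\to\infty$ this exponential tends to $1$ uniformly over the bounded feasible set, so by dominated convergence
\[
\tilde p_{out,K}(\mathbf b)=\Big(\prod_{l=1}^K\tfrac{\sigma^2}{P_l}\Big)(\ln 2)^K\,V(\mathbf b),\qquad V(\mathbf b)=\int_{\mathcal D(\mathbf b)}\prod_{l=1}^K 2^{v_l}\,dv,
\]
where $\mathcal D(\mathbf b)=\{v\in\mathbb R_{\ge 0}^K:\sum_{l=1}^k N_l v_l<b_k^\Sigma,\ \forall k\}$. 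Since the prefactor is a positive constant independent of $\mathbf b$, it suffices to show that $V$ is increasing and convex in $b_r$.

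The key observation is that $b_r$ enters $V$ only through the cumulative sums $b_k^\Sigma$ with $k\ge r$, and it raises all of them by the same amount. Writing $\phi(t)=V$ with $b_r$ replaced by $b_r+t$ (so $b_k^\Sigma\mapsto b_k^\Sigma+t$ for every $k\ge r$), I would apply the single change of variable $v_r\mapsto v_r+t/N_r$. This adds exactly $t$ to the left-hand side of each constraint with $k\ge r$ and nothing to those with $k<r$, thereby cancelling $t$ from every upper bound at once; the only residual $t$-dependence is a prefactor $2^{t/N_r}$ from the integrand term $2^{v_r}$ together with a relaxed lower limit $v_r\ge -t/N_r$. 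Hence
\[
\phi(t)=2^{t/N_r}\,\Psi(t),\qquad \Psi(t)=\int_{\mathcal D_t}\prod_{l=1}^K 2^{v_l}\,dv,
\]
where $\mathcal D_t$ is $\mathcal D(\mathbf b)$ evaluated at the base point but with $v_r\ge 0$ replaced by $v_r\ge -t/N_r$ and all upper bounds frozen.

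Differentiating $\Psi$ with respect to its moving lower limit gives $\Psi'(t)=\tfrac{1}{N_r}2^{-t/N_r}R(t)$, where $R(t)$ is the integral of $\prod_{l\ne r}2^{v_l}$ over the section of $\mathcal D_t$ at $v_r=-t/N_r$; crucially $R(t)\ge 0$ and, since lowering $v_r$ only relaxes the remaining constraints, $R$ is nondecreasing, i.e. $R'(t)\ge 0$. Substituting back yields the clean identity $\phi'(t)=\tfrac{\ln 2}{N_r}\phi(t)+\tfrac{1}{N_r}R(t)$, from which $\phi'(t)\ge 0$ (monotonicity) is immediate, and differentiating once more gives $\phi''(t)=\tfrac{\ln 2}{N_r}\phi'(t)+\tfrac{1}{N_r}R'(t)\ge 0$ (convexity). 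The main obstacle is the analytic bookkeeping for the moving-boundary differentiation: one must justify differentiating under the integral sign, verify that $R$ is exactly the claimed section integral and that it is nonnegative and nondecreasing, and confirm that the high-SNR limit interchanges with these operations. I note that this route never invokes the distinctness of the $N_k$ required for the closed form in Theorem~\ref{THEOREM ASY}, so it handles every $r\in[1,K]$ uniformly.
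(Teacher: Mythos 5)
Your proposal is correct, and it reaches the conclusion by a genuinely different route from the paper's for the convexity step. Both arguments begin from the same high-SNR reduction of $\tilde p_{out,K}$ to a volume-type integral over the region $\bigcap_k\{\sum_{l\le k}N_l v_l<b_k^\Sigma\}$ (the paper's \eqref{out_convex2}--\eqref{out_convex4} is exactly your $V(\mathbf b)$ after the substitution $x_l=2^{v_l}-1$), and both get monotonicity from the fact that increasing $b_r$ enlarges the domain. For convexity, however, the paper passes to a Mellin--Barnes contour representation via the inverse Laplace transform of the unit step, differentiates twice under the contour integral, and uses the algebraic identity $\sum_{l\ge r}s_l=\tfrac{1}{N_r}(N_r\sum_{l\ge r}s_l-1)+\tfrac{1}{N_r}$ to split $\partial^2 f_K/\partial b_r^2$ into $\tfrac{\ln 2}{N_r}\,\partial f_{K/r}/\partial b_r+\tfrac{\ln 2}{N_r}\,\partial f_K/\partial b_r$, where the auxiliary function $f_{K/r}$ (interpreted as the outage volume with the $r$-th power removed) again has an enlarging domain and hence a nonnegative derivative. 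Your shift $v_r\mapsto v_r+t/N_r$ accomplishes the same decomposition purely geometrically: the identity $\phi'=\tfrac{\ln 2}{N_r}\phi+\tfrac{1}{N_r}R$ is the real-domain counterpart of the paper's contour identity, with your boundary-section integral $R(t)$ playing the role of $\partial f_{K/r}/\partial b_r$, and its nonnegativity and monotonicity are visible directly from the growing section rather than inferred from a second contour representation. What your version buys is the avoidance of the contour machinery and of any justification of differentiating a multi-fold Bromwich integral; the remaining bookkeeping (continuity of the section integral so that the fundamental theorem of calculus applies at the moving lower limit, and the interchange of the high-SNR limit with differentiation) is routine since the feasible region is a bounded set varying continuously in $t$. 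Your closing observation is also accurate: neither your argument nor, in fact, the paper's final integral representation needs the distinctness of $N_1,\dots,N_K$ that Theorem~\ref{THEOREM ASY}'s closed form requires, so the convexity claim holds uniformly in $r$.
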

 \begin{proof}
Please see Appendix \ref{outage_convex_proof}.
\end{proof}

With the finding in Theorem \ref{THEOREM_III}, it is obvious that the feasible region of the problem \eqref{LTAT_MAX_FUN1} is convex because of the convexity of ${{\tilde p_{out,K}}}({{\bf{b}}_{r}^{\left( {{{n+1}}} \right)}})$ with respect to $b_r$. Moreover, since ${\tilde T}( {{\bf{b}}_{{r}}^{( {{{n}} + {{1}}} )}} )=\sum\nolimits_{k = 1}^K {{N_k}{\tilde p_{out,k - 1}}( {{\bf{b}}_{{r}}^{( {{{n}} + {{1}}})}} )}$ is a summation of the convex functions, ${\tilde T}( {{\bf{b}}_{{r}}^{( {{{n}} + {{1}}} )}} )$ is also a convex function. However, $\tilde{\mathcal R}\left( {{\bf{b}}_{{r}}^{\left( {{{n}} + {{1}}} \right)}} \right) $ is not convex. To address this issue, successive convex approximation (SCA) can be used to relax \eqref{LTAT_MAX_FUN1} as a convex optimization problem. More specifically, by relying on the first-order Taylor approximation,
 $\tilde {\mathcal R}( {{\bf{b}}_{r}^{\left( {{{n}} + {{1}}} \right)}} )$ is approximated as 
\begin{align}\label{app_G_R}
  &{\tilde {\mathcal R}}( {{\bf{b}}_{r}^{( {{{n}} + {{1}}} )}} ) \approx
  \sum\limits_{k = 1}^r {{b_k}{p_{out,k - 1}}( {{\bf{\tilde b}}_r^{( {n + 1} )}} )}  - \sum\limits_{k = 1}^K {{b_k}} {p_{out,K}}( {{\bf{\tilde b}}_r^{( {n + 1} )}} )\notag\\
  & + \sum\limits_{k = r + 1}^K {{b_k}( {{p_{out,k - 1}}( {{\bf{\tilde b}}_r^{( n+1 )}} ) + {{p}_{out,k - 1}'}( {{\bf{\tilde b}}_r^{( n+1 )}} )( {{b_r} - b_r^{( n )}} )} )}\notag\\
  &\triangleq {\hat {\mathcal R}}( {{\bf{b}}_{r}^{( {{{n}} + {{1}}} )}} )
\end{align}
where $\tilde {\bf{b}}_{r}^{\left( {{{n}} + {{1}}} \right)} = ( b_1^{( {n + 1} )}, \cdots , b_{r - 1}^{( {n + 1} )}, {b_r^{( {n } )}},b_{r + 1}^{( n )}, \cdots , b_K^{( n )} )$ and ${{p}_{out,k - 1}'}( {{\bf{\tilde b}}_r^{( n+1 )}} )$ denotes the first-order partial derivative with respect to $b_r$. By substituting \eqref{app_G_R} into \eqref{LTAT_MAX_FUN1} leads to
\begin{equation}\label{LTAT_MAX_FUN3}
\begin{array}{*{20}{c}}
{\mathop {\max }\limits_{b_r} }&{{\hat {\mathcal R}}( {{\bf{b}}_{r}^{( {{{n}} + {{1}}} )}} ) - \alpha^{(n + 1)}{\tilde T}( {{\bf{b}}_{r}^{\left( {{{n}} + {{1}}} \right)}} )}\\
{\rm s.t.}&{{p_{out,K}}( {{\bf{b}}_{r}^{( {{{n}} + {{1}}} )}} ) \le \varepsilon }.
\end{array}
\end{equation}
Since the objective function of problem \eqref{LTAT_MAX_FUN3} is a concave function with respect to $b_r$ and the feasible region is also a convex set, the sub-problem \eqref{LTAT_MAX_FUN3} can be solved with the classical convex optimization tools. To sum up, the Dinkelbach's algorithm is used to update $\alpha$ in the outer iterations until the difference between $\alpha^{(n+1)}$ and $\alpha^{(n)}$ is negligible or the maximum allowable number of iterations is reached. In the inner iterations, the SCA assisted alternating optimization algorithm is used to update $\bf b$. 
\section{Energy Efficiency}
Apart from the SE, the energy efficiency (EE) is another important performance metric in wireless communications particularly for internet of things (IoT). The EE is defined as the amount of successfully conveyed information bits per unit energy \cite{8319455}. Specifically, 
on the basis of the renewal-reward theorem \cite{zorzi1996use}, the EE ${\eta _K}$ can be written as the ratio of the average number of recovered information bits to the total average consumed energy, i.e.,
\begin{align}\label{eqn:EE}
{\eta _K} = \frac{{\mathbb E}\{\mathcal R\}}{{\mathbb E}\{\mathcal E\}}= \frac{{\sum\nolimits_{k = 1}^K {{b_k}\left( {{p_{out,k - 1}} - {p_{out,K}}} \right)} }}{{\sum\nolimits_{k = 1}^K {{N_k}{P_k}{p_{out,k - 1}}} }},
\end{align}
where ${{\mathbb E}\{\mathcal E\}} = {\sum\nolimits_{k = 1}^K {{N_k}{P_k}{p_{out,k - 1}}} }$ quantifies the average consumed energy in one HARQ cycle. In what follows, we first derive the upper bound of the EE that offers a useful guideline for designing VL-XP-HARQ systems. The transmission powers $P_1,\cdots,P_K$ are then properly designed to maximize the EE while ensuring an outage constraint.
\subsection{Upper Bound of EE}
For simplicity, we assume equal power allocation scheme, i.e., ${P_1} =  \cdots  = {P_K} \triangleq  P$. By using the definition of the EE in \eqref{eqn:EE}, the energy efficiency can be rewritten as
\begin{align}\label{eqn:EEsame}
{\eta _K} = \frac{{\sum\nolimits_{k = 1}^K {{b_k}\left( {{p_{out,k - 1}} - {p_{out,K}}} \right)} }}{{P\sum\nolimits_{k = 1}^K {{N_k}{p_{out,k - 1}}} }} = \frac{{{{\cal T}_{ K}}}}{{P}}.
\end{align}
As proved in \eqref{ineqn:EAC}, the SE ${{{\cal T}_{ K}}}$ is not greater than the EC $\bar C = {\mathbb{E}}\{{{{\log }_2}( {1 + {{\left| h_1 \right|}^2} P/\sigma^2} )} \}$, that is, ${{\cal T}_{ K}} \le \bar C$. Accordingly, the EE is upper bounded as
\begin{align}\label{eqn:EEsame}
{\eta _K} \le \frac{{\mathbb{E}}\{{{{\log }_2}( {1 + {{\left| h_1 \right|}^2} P/\sigma^2} )} \}}{{P}},
\end{align}
where the equality holds if and only if the VL-XP-HARQ achieves the ergodic capacity.
In addition, it is readily proved that the upper bound ${\mathbb E}\{ {{{\log }_2}( {1 + {{\left| h_1 \right|}^2}P/\sigma^2} )} \}/P$ is a decreasing function of ${P}$.
The maximum value of the upper bound is achieved at $P=0$, i.e., ${\mathbb E}\{ {{{\log }_2}( {1 + {{\left| h_1 \right|}^2}P/\sigma^2} )} \}/P < {\mathbb{E}}\{{\left| h \right|^2}\}/ (\sigma^2 \ln 2)$. 
Hence, the maximum EE is bounded as
\begin{align}\label{bound:EE}
 {\eta _K} \le \frac{{{{\mathbb{E}}\{ {{{\left| h _1\right|}^2}} \}}}}{{\sigma^2\ln 2}}.
\end{align}
\subsection{Maximization of EE}
In this section, we focus on the maximization of the EE.
More specifically, the transmission powers in each HARQ round are optimized to maximize the EE while maintaining an outage constraint (i.e., ${p_{out,K}} \le \varepsilon $). The maximization of the EE is therefore formulated as
\begin{equation}\label{MAX_EE}
\begin{array}{*{20}{c}}
{\mathop {\max }\limits_{{P_1}, \cdots ,{P_K}} }&{{\eta _K}}\\
{\rm s.t.}&{{p_{out,K}} \le \varepsilon }
\end{array},
\end{equation}
where ${\bf P}=(P_1,\cdots,P_K)$. Similarly, the asymptotic outage probability in Theorem 2 is used to replace the true outage probability in \eqref{MAX_EE} facilitate the optimization. Based on the finding that \eqref{MAX_EE} takes the similar form as the maximization of the SE in \eqref{MAX_LTAT}, \eqref{MAX_EE} is thus rewritten by applying Dinkelbach's transform as
\begin{equation}\label{EE_MAX_FUN1}
\begin{array}{*{20}{c}}
{\mathop {\max }\limits_{{\bf{P}}} }&{\bar{\mathcal R}\left( {\bf{P}} \right) - \beta \bar {\mathcal E}\left( {\bf{P}} \right)}\\
{{\rm{s}}.{\rm{t}}.}&{{\bar p_{out,K}}\left( {\bf{P}} \right) \le \varepsilon }
\end{array},
\end{equation}
where $\bar{\mathcal R}( {\bf{P}} ) = \sum\nolimits_{k = 1}^K {b_k}( {\bar p_{out,k - 1}}( {\bf{P}} ) - {\bar p_{out,K}}( {\bf{P}} ) ) $, $\bar{\mathcal E}\left( {\bf{P}} \right) = \sum\nolimits_{k = 1}^K {{N_k}P_k{\bar p_{out,k - 1}}\left( {\bf{P}} \right)} $, ${\bar p_{out,k}}(\mathbf P)$ refers to the asymptotic outage probability of VL-XP-HARQ with the transmission powers in different HARQ rounds set as $\bf P$ and ${\bar p_{out,k}}(\mathbf b)$ can be evaluated through Theorem \ref{THEOREM ASY}. Moreover, in the outer iterations, the Dinkelbach's algorithm is used to update the auxiliary variable $\beta$ as
\begin{align}\label{eta_n_R}
\beta^{(n+1)} = \frac{{\bar{\mathcal R}\left( {{\bf{P}}^{\left( {{{n}}} \right)}} \right)}}{{{\bar{\mathcal E}}\left( {{\bf{P}}^{\left( {{{n}}} \right)}} \right)}},
\end{align}
where $n$ the iteration index. 



Similarly, the SCA aided alternating optimization algorithm is used to update $\bf P$ in the inner iterations. Thus, \eqref{MAX_EE} can be divided into $K$ sub-problems and the first-order Taylor expansion can be leveraged to approximate $\bar{\mathcal R}\left( {\bf{P}} \right)$. To proceed with the optimization, we define ${{\bf{P}}_{r}^{{{(n + 1)}}}} = ( {P_1^{\left( {n + 1} \right)}, \cdots ,P_{r - 1}^{\left( {n + 1} \right)},{P_r},R_{r + 1}^{\left( n \right)}, \cdots ,P_K^{\left( n \right)}} )$. Thereupon, each sub-problem reads as
\begin{equation}\label{EE_MAX_FUN2}
\begin{array}{*{20}{c}}
{\mathop {\max }\limits_{{P_r}} }&{\breve{\mathcal R}\left( {{\bf{P}}_r^{(n + 1)}} \right) - {\beta ^{\left( {n + 1} \right)}}\bar{\mathcal E}\left( {{\bf{P}}_r^{(n + 1)}} \right)}\\
{{\rm{s}}.{\rm{t}}.}&{{\bar p_{out,K}}\left( {{\bf{P}}_r^{(n + 1)}} \right) \le \varepsilon }
\end{array},
\end{equation}
where $\breve{\mathcal R}( {{\bf{P}}_r^{(n + 1)}} )$ is given by \eqref{app_G}, as shown at the top of the next page, where $\tilde {\bf{P}}_r^{(n + 1)} = (P_1^{\left( {n + 1} \right)}, \cdots ,P_{r - 1}^{\left( {n + 1} \right)},P_r^{\left( n \right)},R_{r + 1}^{\left( n \right)}, \cdots ,P_K^{\left( n \right)})$ and ${{\bar p}_{out,k - 1}'}( {\tilde{\bf{ P}}_r^{( n+1 )}} )$ denotes the first-order partial derivative with respect to $P_r$.
\begin{figure*}
\begin{align}\label{app_G}
\breve{\mathcal R}\left( {{\bf{P}}_r^{(n + 1)}} \right)\approx &
\sum\nolimits_{k = 1}^r {{b_k}{\bar p_{out,k - 1}}\left( {{\bf{P}}_r^{(n+1)}} \right)}  - \sum\nolimits_{k = 1}^K {{b_k}{\bar p_{out,K}}\left( {{\bf{P}}_r^{(n + 1)}} \right)} \notag\\
 &+ \sum\nolimits_{k = r + 1}^K {{b_k}\left( {{\bar p_{out,k - 1}}\left( {\tilde{\bf{ P}}_r^{(n+1)}} \right) + {\bar p_{out,k - 1}^\prime} \left( {\tilde {\bf{P}}_r^{(n+1)}} \right)\left( {{P_r} - P_r^{(n)}} \right)} \right)},
\end{align}
\hrulefill
\end{figure*}

\section{Numerical Results}
In this section, numerical results are presented for verification. Unless otherwise indicted, we assume equal power allocation for VL-XP-HARQ, i.e., ${P_1} =  \cdots  = {P_K} = P$, and $\sigma^2=1$. The average transmit SNR is defined as $\bar \gamma = P/\sigma^2$. For notational convenience, we denote by ${{\bf {b}}_K}=(b_1,\cdots,b_K)$ the vector of the numbers of new information bits introduced in $K$ HARQ rounds, and denote by ${{\bf {N}}_K}=(N_1,\cdots,N_K)$ the vector of the numbers of symbols in $K$ HARQ rounds.

\subsection{Verification}
In Fig. \ref{sig_out:eps}, the outage probability ${P_{out,2}}$ is plotted against the average transmit SNR $\gamma$ for $K=2$, wherein the labels ``VL-XP-SIM'', ``VL-XP-ANA'', and ``VL-XP-ASY'' represent the simulated, the analytical and the asymptotic results of VL-XP-HARQ, respectively. Moreover, VL-IR-HARQ (labeled as ``VL-IR'') is used for benchmarking purpose \cite{8555645}, wherein the incremental information bits are sent, i.e., $b_2=\cdots=b_K=0$ bits. It is revealed that the analytical and the simulated results are in perfect agreement and the analytical results approach to the asymptotic ones under high SNR. The observations corroborate the validity of our analysis. In addition, it can be seen that the asymptotic outage curves are parallel to each other for different ${{\bf {b}}_2}$, which is due to the same diversity order from Theorem \ref{THEOREM ASY}. Furthermore, it is easily found that increasing the number of information bits will degrade the outage performance. Therefore, it is found that VL-IR-HARQ surpasses VL-XP-HARQ in terms of the outage probability. This is essentially due to the fact that more information bits are delivered, consequently leads to the deterioration of the reliability.
Moreover, Fig. \ref{mul_out:eps} depicts the relationship between the outage probability ${P_{out,K}}$ and the average transmit SNR $\bar \gamma$ for $K=3$. It is observed that the simulated results approach to the asymptotic ones under high SNR, which justifies our analytical results. 
\begin{figure}
\centering
\includegraphics[width=3.5in]{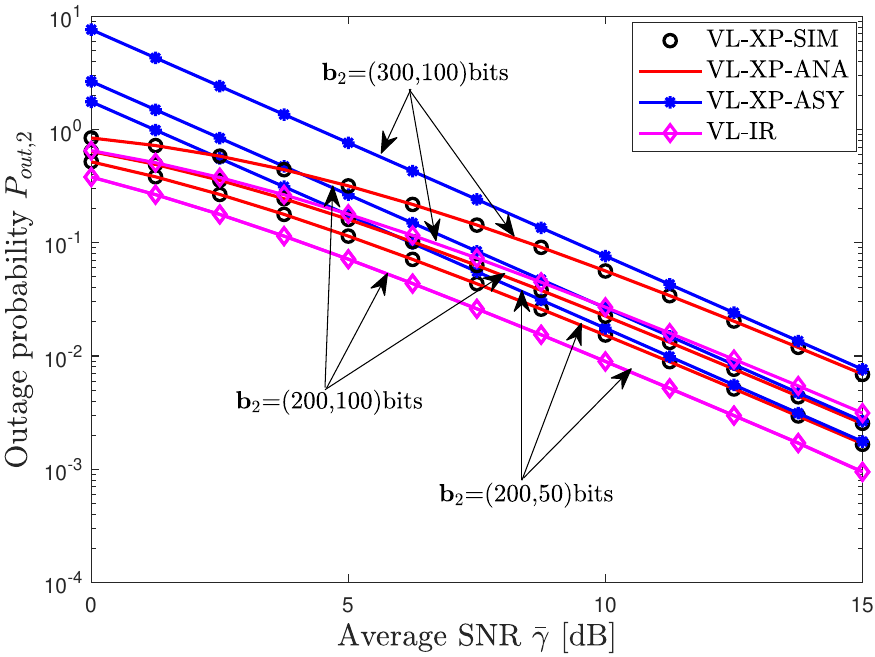}
\caption{The outage probability versus the average transmit SNR $\bar \gamma$ for $K=2$ and ${{\bf {N}}_2}=(100,200)$~symbols.}\label{sig_out:eps}
\end{figure}

\begin{figure}
\centering
\includegraphics[width=3.5in]{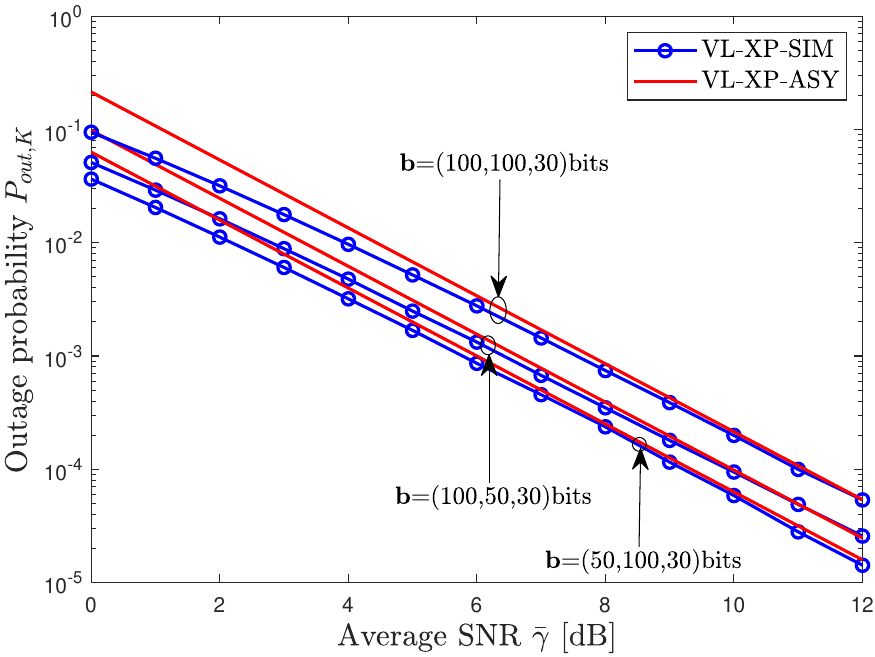}
\caption{The outage probability versus the average transmit SNR $\bar \gamma$ for $K=3$ and ${{\bf {N}}_3}=(100,200,250)$~symbols.}\label{mul_out:eps}
\end{figure}

\subsection{SE}
Fig. \ref{ANA_SIM_LTAT:eps} investigates the SE of VL-XP-HARQ and VL-IR-HARQ against the average transmit SNR for $K=2$. In Fig. \ref{ANA_SIM_LTAT:eps}, the analytical results are in line with the simulated ones, which verify the correctness of our analysis. Besides, it is observed that VL-XP-HARQ performs better than VL-IR-HARQ at low-to-medium SNR in terms of spectral efficiency. In addition, delivering more information bits can improve the SE of VL-XP-HARQ, albeit at the price of increasing its outage probability, as observed in Fig. \ref{sig_out:eps}.
\begin{figure}
\centering
\includegraphics[width=3.5in]{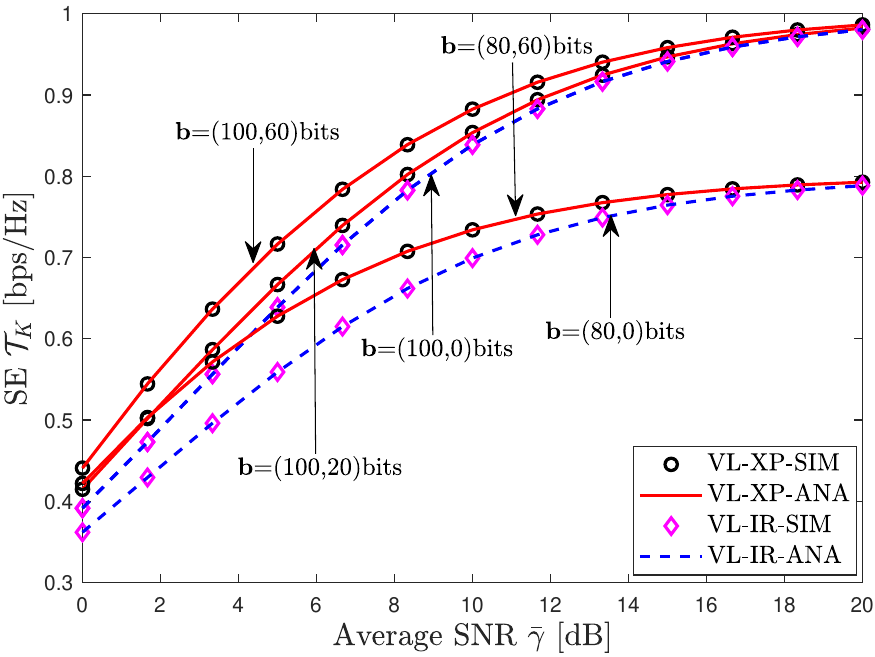}
\caption{The SE versus the average transmit SNR $\bar \gamma$ for $K=2$ and ${{\bf {N}}_2}=(100,200)$~symbols.}\label{ANA_SIM_LTAT:eps}
\end{figure}

In Fig. \ref{EC_VL_LTAT:eps}, the EC and the SE are plotted against the number of the information bits in the initial HARQ round, i.e., ${b_1}$. As $K$ and $b_{1}$ increase, the maximum SE for both VL-XP-HARQ and VL-IR-HARQ tends to the EC. This tendency is consistent with our analysis. In addition, it can be seen that the maximum SE of VL-XP-HARQ is greater than that of VL-IR-HARQ for different $K$. This shows the superiority of VL-XP-HARQ in terms of the SE.
\begin{figure}
\centering
\includegraphics[width=3.5in]{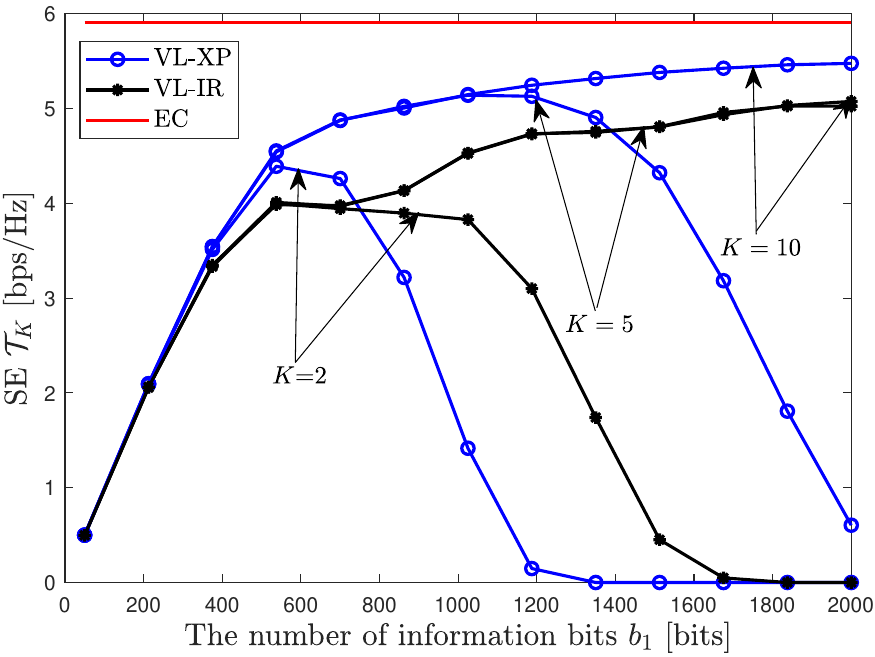}
\caption{The SE versus the number of information bits in initial HARQ round given $b_2=\cdots=b_K=20$~bits and $\bar \gamma=20$~dB.}\label{EC_VL_LTAT:eps}
\end{figure}

Fig. \ref{LTAT_max_power_selection:eps} plots the optimal SE of VL-XP-HARQ and VL-IR-HARQ against the outage threshold $\varepsilon$ for different transmission powers ${P}$ and the maximum numbers of HARQ rounds $K$. It is observed that the optimal SE of VL-XP-HARQ is better than that of VL-IR-HARQ. Moreover, the optimal SE of both VL-XP-HARQ and VL-IR-HARQ converges to a performance bound under loose outage constraint, i.e., $\varepsilon  \to 1$. However, there is a non-negligible gap between the SE of VL-XP-HARQ and that of VL-IR-HARQ as the outage constraint is relaxed to 1. This is due to the fact that new information bits are added in retransmissions, which inevitably leads to the enhancement of the SE. 
\begin{figure}
\centering
\includegraphics[width=3.5in]{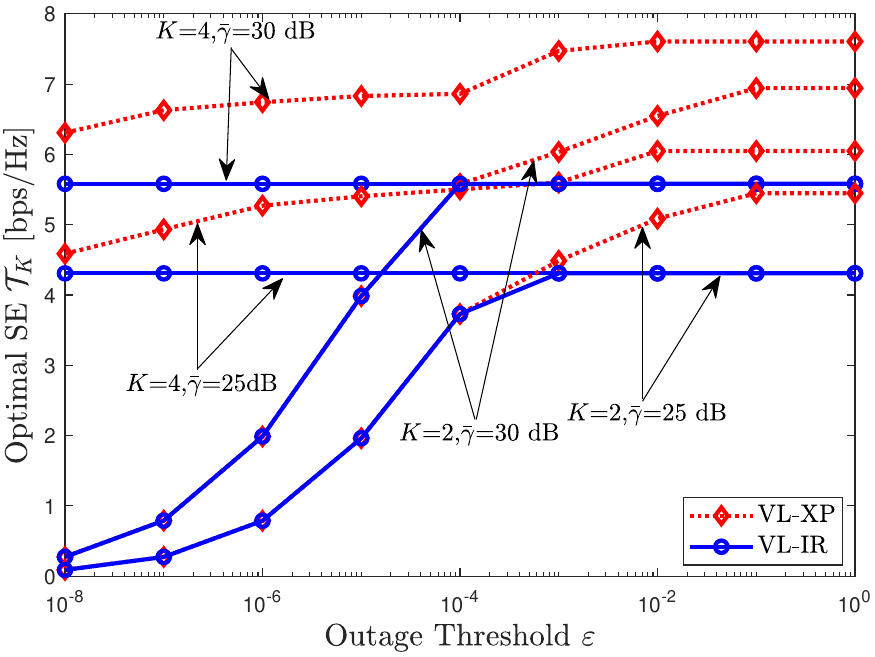}
\caption{The maximum SE versus the outage tolerance $\varepsilon$ by setting ${{\bf {N}}_2}=(100,200)$ symbols for $K=2$ and ${{\bf {N}}_4}=(100,200,201,202)$ symbols for $K=4$.}\label{LTAT_max_power_selection:eps}
\end{figure}

\subsection{EE}
To validate the analysis of the EE of VL-XP-HARQ, the simulated and analytical results are presented in Fig. \ref{EE:eps}. It is clear that the asymptotic results approach to the simulated ones under high SNR, which justifies the significance of the asymptotic analysis. In addition, it can be seen that the EE of VL-XP-HARQ coincides with that of VL-IR-HARQ in the high SNR regime. This is due to the fact that only one transmission is enough for one HARQ cycle at high SNR.
\begin{figure}
\centering
\includegraphics[width=3.5in]{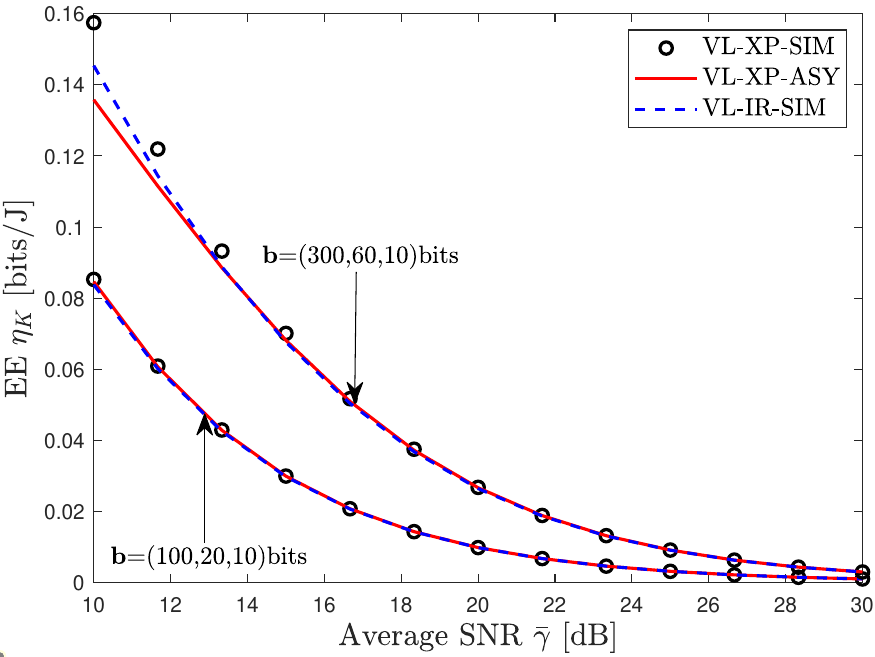}
\caption{The EE versus the average transmit SNR $\bar \gamma$ for $K=3$ and ${{\bf {N}}_3}=(100,200,201)$~symbols.}\label{EE:eps}
\end{figure}

To investigate the impact of $K$ on the EE, the EE of VL-XP-HARQ versus the number of information bits in the initial HARQ round, i.e., ${b_1}$ is shown in Fig. \ref{EE_bound_K:eps}, wherein the label ``Up-bound'' represents the upper bound of the EE given by \eqref{bound:EE}. It is found that the maximum EE increases with $K$. However, the EE is reduced if $b_1$ is sufficiently large. This is because the increasing the information bits will also degrades the outage performance, which consequently yields the decrease of the EE. In addition, in Fig. \ref{EE_bound_power:eps}, the EE is plotted against $b_1$ by considering different transmit SNR $\bar \gamma $. It can be seen that the maximum EE is improved with the reduction of $\bar \gamma$. This is consistent with our analytical results.

\begin{figure}
\centering
\includegraphics[width=3.5in]{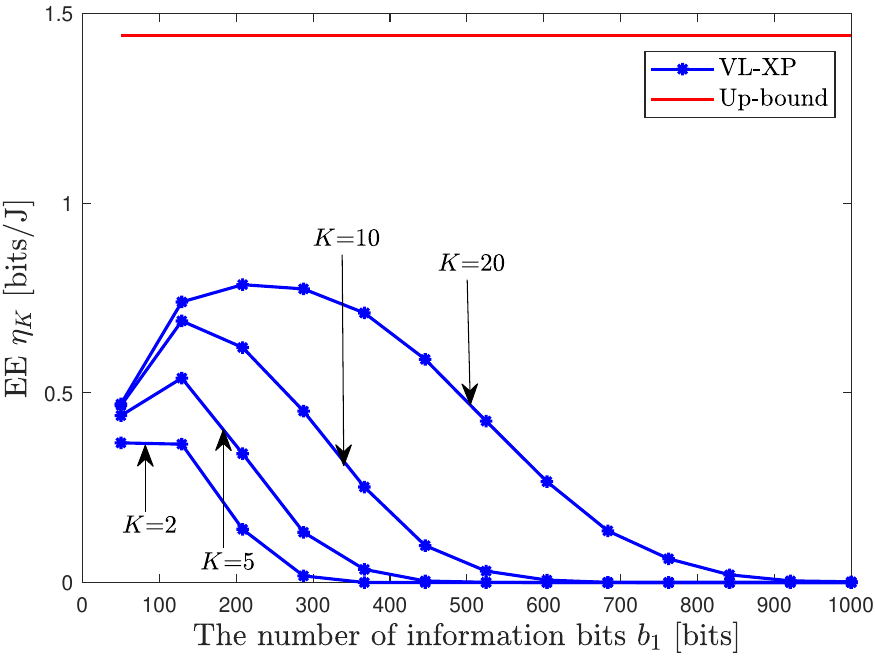}
\caption{The EE versus the number of information bits in initial HARQ round by setting $b_2=\cdots=b_K=20$~bits and $\bar \gamma=0$~dB.}\label{EE_bound_K:eps}
\end{figure}
\begin{figure}
\centering
\includegraphics[width=3.5in]{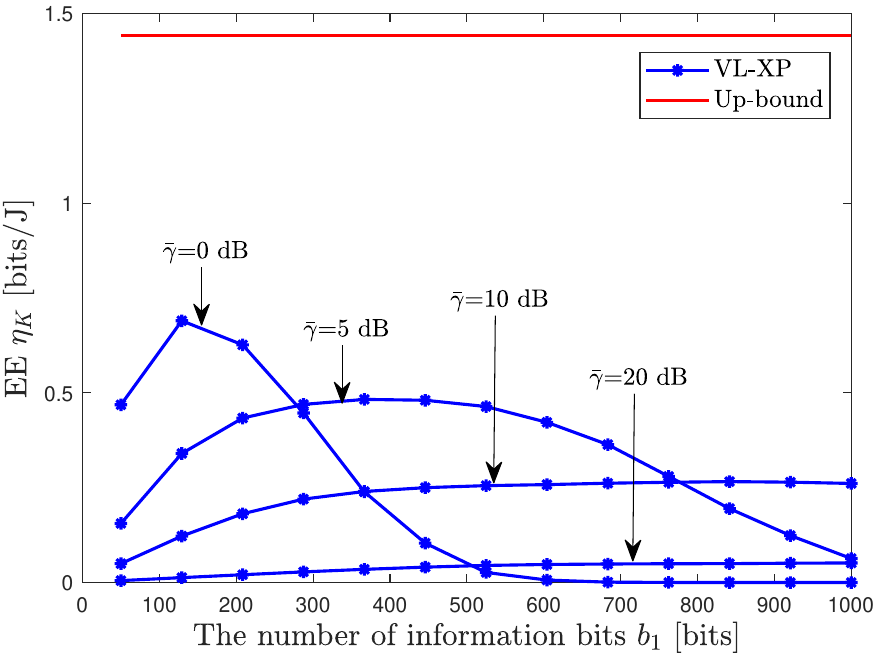}
\caption{The EE versus the number of information bits in initial HARQ round for $K=10$ and {$b_2=\cdots=b_K=20$~bits}.}\label{EE_bound_power:eps}
\end{figure}

In Fig. \ref{EE_max_power_selection:eps}, the optimal EE of both VL-XP-HARQ and VL-IR-HARQ versus the outage tolerance $\varepsilon $ is plotted by considering different $K$ and ${\bf{b}}$. Clearly, the optimal EE is reduced under a strict outage constraint. This due to the fact that the reliability and the efficiency are contradictory performance metrics. Moreover, it can seen from Fig. \ref{EE_max_power_selection:eps} that a larger $K$ can improve the EE of both VL-XP-HARQ and VL-IR-HARQ. 
However, a larger number of accumulated information bits ${\bf {b}}_K^\Sigma $ will reduce the EE. This is due to the reason that the optimal scheme prefers to save the power consumption, while large ${\bf {b}}_K^\Sigma $ will increase the outage probability, eventually result in the loss of the EE.

\begin{figure}
\centering
\includegraphics[width=3.5in]{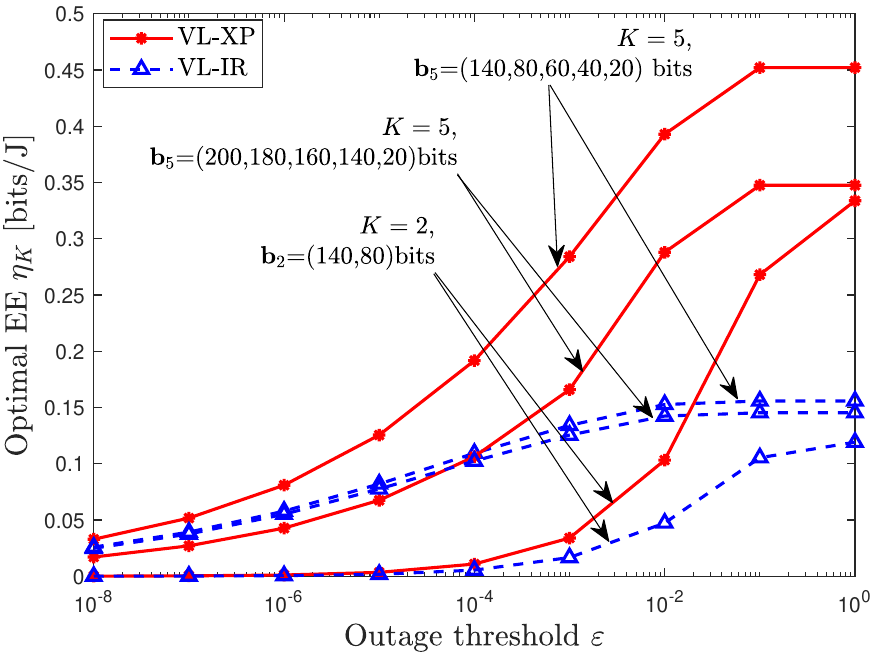}
\caption{The maximum EE versus the outage tolerance $\varepsilon$.}\label{EE_max_power_selection:eps}
\end{figure}
\section{Conclusion}
This paper has scrutinized the SE and the EE of VL-XP-HARQ. In particular, the closed-form expression of the SE has been obtained by renewal-reward theorem, and the relationship between SE and EC has been explored. In addition, the numbers of transmission bits in different HARQ rounds have been optimized to maximize the SE while maintaining an outage constraint. To solve this problem, Dinkelbach's transform has been adopted to convert the fractional programming problem into a subtraction form, which can be further decomposed into $K$ sub-problems through alternating optimization. On the basis of the monotonicity and convexity of the asymptotic outage probability, the successive convex approximation (SCA) has been invoked to relax each sub-problem to a convex problem. Aside from the SE, this paper has also examined the EE of VL-XP-HARQ. The upper bound of the EE has been proved to be attainable if the EC is achieved and the amount of power consumption tends to zero. Moreover, the similar method to the maximization of the SE has been applied to solve the maximization of the EE through optimal power allocation while ensuring the outage constraint. In the end, Monte Carlo simulations have been performed to validate our analytical results and VL-IR-HARQ has also been used to highlight the advantages of VL-XP-HARQ.




\appendices
\section{Proof of \eqref{eqn:out_for}}\label{app_out_for}
In order to derive the outage probability, it is imperative to obtain the achievable rate region of VL-XP-HARQ. According to Shannon information theory, if $K=1$, the outage event occurs if the mutual information is less than the number of information bits, i.e., $I({\bf x}_1;{\bf y}_1)<b_1$, where $I({\bf x}_k;{\bf y}_k)=N_k \log_2(1+\gamma_k)$. Thus the outage probability is given by $p_{out,1}=\Pr(I({\bf x}_1;{\bf y}_1)<b_1)$. To get a general expression of outage probability in the case of $K>1$, we next consider the case of $K=2$. The relevant results can be easily extended to the case with an arbitrary maximum number of HARQ rounds. In the meantime, we assume that Gaussian codebook and typical set decoding are leveraged. By following the similar proof as in \cite[Appendix A]{jabi2017adaptive}, the achievable rate region of VL-XP-HARQ can be obtained in the lemma given below.

\begin{lemma}\label{the:num_bits}
If the numbers of information bits delivered in two HARQ rounds meet the following three conditions
\begin{subequations}
    \begin{align}\label{eqn:inf_theory}
{b_1} + {b_2} &\le I\left( {{{\bf{x}}_1};{{{\bf{y}}}_{{1}}}} \right) + I\left( {{{\bf{x}}_2};{{{\bf{y}}}_{{2}}}} \right),\\
{b_2} &\le I\left( {{{\bf{x}}_2};{\bf{{y}}}_2} \right),  \\
    {b_1} &\le I\left( {{{\bf{x}}_1};{{{\bf{y}}}_{{1}}}} \right) + I\left( {{{\bf{x}}_2};{{{\bf{y}}}_{{2}}}} \right),
    \end{align}
\end{subequations}
there exists an HARQ-code $c^*$ with maximum probability of error $\epsilon\to 0$.
\end{lemma}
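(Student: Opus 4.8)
The plan is to prove achievability of this rate region by a random-coding argument combined with joint typicality decoding, adapting the multiple-access-channel (MAC) template of \cite[Appendix A]{jabi2017adaptive} to the variable-length setting. First I would fix the capacity-achieving (Gaussian) input distribution and generate the codebook ensemble: for each $m_1\in\{1,\dots,2^{b_1}\}$ draw ${\bf x}_1(m_1)\in{\cal X}^{N_1}$ with i.i.d. entries, and for each pair $(m_1,m_2)\in\{1,\dots,2^{b_1}\}\times\{1,\dots,2^{b_2}\}$ independently draw ${\bf x}_2(m_1,m_2)\in{\cal X}^{N_2}$ with i.i.d. entries. Indexing ${\bf x}_2$ by the \emph{pair} $(m_1,m_2)$ rather than by $m_2$ alone is precisely what encodes the cross-packet structure ${\bf x}_2=\Phi[{\rm m}_{[2]}]$ of the system model, and it is the source of the asymmetry among the three conditions.

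Next I would analyze the joint typicality decoder. Assuming $(m_1,m_2)=(1,1)$ is sent, the receiver declares the unique $(\hat m_1,\hat m_2)$ for which $({\bf x}_1(\hat m_1),{\bf y}_1)$ and $({\bf x}_2(\hat m_1,\hat m_2),{\bf y}_2)$ are simultaneously jointly typical. An error occurs only if the transmitted codewords are atypical with the outputs, which happens with vanishing probability by the asymptotic equipartition property, or if some incorrect pair is accepted. I would split the latter into three disjoint families and bound each by the union bound, using that the probability a fresh codeword is spuriously typical with an output of block length $N_k$ decays like $2^{-I({\bf x}_k;{\bf y}_k)}$ with $I({\bf x}_k;{\bf y}_k)=N_k\log_2(1+\gamma_k)$. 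When $\hat m_1=1$ but $\hat m_2\ne1$, the codeword ${\bf x}_1(1)$ stays correct, so only ${\bf x}_2(1,\hat m_2)$ must be spuriously typical with ${\bf y}_2$; summing over the $2^{b_2}$ candidates yields a bound $\approx 2^{\,b_2-I({\bf x}_2;{\bf y}_2)}$, which vanishes under $b_2\le I({\bf x}_2;{\bf y}_2)$. When $\hat m_1\ne1$, changing the first index re-randomizes \emph{both} ${\bf x}_1$ and ${\bf x}_2$, so both typicality tests become spurious and the per-candidate probability is $\approx 2^{-(I({\bf x}_1;{\bf y}_1)+I({\bf x}_2;{\bf y}_2))}$; there are $\approx 2^{b_1}$ such candidates when $\hat m_2=1$ and $\approx 2^{b_1+b_2}$ when $\hat m_2\ne1$, producing $b_1\le I({\bf x}_1;{\bf y}_1)+I({\bf x}_2;{\bf y}_2)$ and $b_1+b_2\le I({\bf x}_1;{\bf y}_1)+I({\bf x}_2;{\bf y}_2)$, respectively. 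These three families are exactly the three inequalities of the lemma.

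Finally, since the average error probability over the random ensemble vanishes whenever the three conditions hold strictly, at least one deterministic codebook $c^*$ must achieve an average error probability tending to zero; a standard expurgation step, discarding the worst half of the messages, then converts this into a vanishing \emph{maximum} probability of error, giving the claimed HARQ-code. I expect the main obstacle to be bookkeeping the error-event decomposition correctly under the asymmetric dependence: one must verify that the case $\hat m_1=1,\hat m_2\ne1$ genuinely leaves the round-one test intact, so that only $I({\bf x}_2;{\bf y}_2)$ appears and $b_2$ is \emph{not} bounded by the full sum, while every case with $\hat m_1\ne1$ activates both tests. The variable-length feature ($N_1\ne N_2$) enters only through the block mutual informations $I({\bf x}_k;{\bf y}_k)=N_k\log_2(1+\gamma_k)$ and does not alter the combinatorial structure of the argument.
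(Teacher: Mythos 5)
Your proposal is correct and follows essentially the same route as the paper: the same pair-indexed random codebook, the same joint-typicality decoder, and the same three-way decomposition of the incorrect-candidate events (your cases $\hat m_1=1,\hat m_2\ne 1$; $\hat m_1\ne 1,\hat m_2=1$; and $\hat m_1\ne 1,\hat m_2\ne 1$ are exactly the paper's $E_3$, $E_2$, $E_1$, with $E_0$ the atypicality event), each bounded via the packing/joint-typicality lemma to yield the three stated inequalities. The only addition is your explicit expurgation step to pass from average to maximum error probability, which the paper leaves implicit.
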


\begin{proof}
The successful decoding event occurs if the following sequences of symbols are jointly typical
\begin{equation}
    ({\left( {\Phi \left[ {{{\rm{m}}_1}} \right],{{{\bf{y}}}_1}} \right) \in A_\epsilon ^{ \star {N_1}}}),\, {\left( {\Phi \left[ {{{\rm{m}}_1},{{\rm{m}}_2}} \right],{{{\bf{y}}}_2}} \right) \in A_\epsilon ^{ \star {N_2}}}.
\end{equation}
Otherwise, the following error events take place:
\begin{subequations}
    \begin{align}
{{ E}_{0}} &= \left\{ {\left( {\Phi \left[ {{{\rm{m}}_1}} \right],{{\bf{y}}_1}} \right) \notin A_\epsilon ^{ \star {N_1}} \vee \left( {\Phi \left[ {{{\rm{m}}_1},{{\rm{m}}_2}} \right],{{\bf{y}}_2}} \right) \notin A_\epsilon ^{ \star {N_2}}} \right\},\\
{E_1} &= \left\{ \begin{array}{l}
\exists {{{\rm{\tilde m}}}_1} \ne {m_1},{{{\rm{\tilde m}}}_2} \ne {m_2}:\\
\left( {\Phi \left[ {{{{\rm{\tilde m}}}_1}} \right],{{\bf{y}}_1}} \right) \in A_\epsilon^{ \star {N_1}} \wedge \left( {\Phi \left[ {{{{\rm{\tilde m}}}_1},{{{\rm{\tilde m}}}_2}} \right],{{\bf{y}}_2}} \right) \in A_\epsilon^{ \star {N_2}},
\end{array} \right\}\\
{E_2} &= \left\{ {\begin{array}{*{20}{l}}
{\exists {{{\rm{\tilde m}}}_1} \ne {m_1}:}\\
{\left( {\Phi \left[ {{{{\rm{\tilde m}}}_1}} \right],{{\bf{y}}_1}} \right) \in A_\epsilon^{ \star {N_1}} \wedge  \left( {\Phi \left[ {{{{\rm{\tilde m}}}_1},{{\rm{m}}_2}} \right],{{\bf{y}}_2}} \right) \in A_\epsilon^{ \star {N_2}}}
\end{array}} \right\},\\
{E_3}& = \left\{ {\exists {{{\rm{\tilde m}}}_2} \ne {m_2}:\left( {\Phi \left[ {{{\rm{m}}_1},{{{\rm{\tilde m}}}_2}} \right],{{\bf{y}}_2}} \right) \in A_\epsilon^{ \star {N_2}}} \right\}.
    \end{align}
\end{subequations}
By using Packing lemma \cite[p.46]{el2011network}, the error probability is bounded as
\begin{align}\label{ineqn:dep}
{P_e} =& \Pr \left( {{E_0} \cup {E_1} \cup {E_2} \cup {E_3}} \right)\notag\\
 \le& \Pr \left( {{E_0}} \right) + \Pr \left( {{E_1}} \right) + \Pr \left( {{E_2}} \right) + \Pr \left( {{E_3}} \right)\notag\\
 \le& \epsilon  + {2^{3\left( {{N_1} + {N_2}} \right)\epsilon }}{2^{{b_1} + {b_2} - \left( {I\left( {{{\bf{x}}_1};{{\bf{y}}_1}} \right) + I\left( {{{\bf{x}}_2};{{\bf{y}}_2}} \right)} \right)}}\notag\\
 &+ {2^{3\left( {{N_1} + {N_2}} \right)\epsilon }}{2^{{b_1} - \left( {I\left( {{{\bf{x}}_1};{{\bf{y}}_1}} \right) + I\left( {{{\bf{x}}_2};{{\bf{y}}_2}} \right)} \right)}} + {2^{3{N_2}\epsilon }}{2^{{b_2} - I\left( {{{\bf{x}}_2};{{\bf{y}}_2}} \right)}}\notag\\
 \le& 4\epsilon,
\end{align}
where the second inequality holds by using \cite[Theorem 7.6.1]{cover1999elements}, and the last inequality holds if the three conditions in \eqref{eqn:inf_theory} are satisfied. Thus we complete the proof.
\end{proof}
With Lemma \ref{the:num_bits}, the decoding failure happens if and only if the receiver fails to decode the message in the first two HARQ rounds, i.e., the condition in \eqref{eqn:ach_rate} holds, where \eqref{eqn:ach_rate} is shown at the top of the next page, and herein ``ERR'' refers to error.
\begin{figure*}
\begin{align}\label{eqn:ach_rate}
{\rm ERR}_{\rm 1^{st}~Round} \wedge {\rm ERR}_{\rm 2^{nd}~Round}&=I({{\bf{x}}_1};{{\bf{y}}_1}) < {b_1} \wedge \overline{\left( {I({{\bf{x}}_2};{{\bf{y}}_2}) \ge {b_2} \vee (I({{\bf{x}}_1};{{\bf{y}}_1}) + I({{\bf{x}}_2};{{\bf{y}}_2}) \ge {b_1+b_2}} )\right)}\notag\\
&=I({{\bf{x}}_1};{{\bf{y}}_1}) < {b_1} \wedge \left( {I({{\bf{x}}_1};{{\bf{y}}_1}) + I({{\bf{x}}_2};{{\bf{y}}_2}) < {b_1+b_2}} \right)\notag\\
&={N_1}{\log _2}\left( {1 + {\gamma _1}} \right) < {b_1} \wedge \sum\nolimits_{k = 1}^2 {{N_k}{{\log }_2}\left( {1 + {\gamma _k}} \right)}  < \sum\nolimits_{k = 1}^2 {{b_k}}.
\end{align}
\hrulefill
\end{figure*}
It is easily found that the outage expression in \eqref{eqn:out_for} applies to $K=2$. By extending Lemma \ref{the:num_bits} for an arbitrary value of $K$, the generalized expression of the outage probability in \eqref{eqn:out_for} can be proved.



\section{Proof of theorem \ref{THEOREM ACCK2}}\label{acc_k2}
\subsubsection{${p_{out,1}}$}
If $k=1$, the outage probability ${p_{out,k}}$ can be obtained as
\begin{align}
    {p_{out,1}} &= \Pr \left( {{N_1}{{\log }_2}(1 + {\gamma _1}) < {b_1}} \right) \notag\\
    &= \Pr\left( {{{\left| {{h_1}} \right|}^2} < \frac{{{\sigma ^2}}}{{{P_1}}}\left( {{2^{{b_1}/{N_1}}} - 1} \right)} \right) \notag\\
    &=  1 - {e^{ - \left( {{2^{{b_1}/{N_1}}} - 1} \right){\sigma ^2}/{P_1}}},
\end{align}
where the last step holds because of the exponential distribution of ${\left| {{h_1}} \right|^2}$.
\subsubsection{${p_{out,2}}$}
According to \eqref{eqn:out_for}, the outage probability ${p_{out,2}}$ is expressed as
\begin{align}\label{eqn_ext_2}
&{p_{out,2}}= {{\mathbb E}_{{\gamma _2}}}\left\{ {\Pr \left( {{{\left| {{h_1}} \right|}^2} < \frac{{{\gamma _{\rm th}}{\sigma ^2}}}{{{P_1}}}\Big | {{\gamma _2}}} \right)} \right\}\notag\\
 &= {{\mathbb {E}}_{{\gamma _2}}}\left\{ {\left( {1 - {e^{ - \frac{{{\gamma _{\rm th}}{\sigma ^2}}}{{{P_1}}}}}} \right)u\left( {{2^{\frac{{{b_1} + {b_2} - {N_2}{{\log }_2}(1 + {\gamma _2})}}{{{N_1}}}}} - 1} \right)} \right\},
 \end{align}
  where ${\gamma _{\rm th}} = \min \left\{ {{2^{{b_1}/{N_1}}} - 1,{2^{\left( {{b_1} + {b_2} - {N_2}{{\log }_2}(1 + {\gamma _2})} \right)/{N_1}}} - 1} \right\}$, $u( \cdot )$ stands for the unit step function. Since 
  the received SNR ${\gamma_2}$ obeys exponential distribution with the probability density function (PDF) of ${f_{{\gamma _2}}}(x) = \exp ( - {\sigma ^2}x/{P_2}){\sigma ^2}/{P_2}$. Accordingly, \eqref{eqn_ext_2} can be expressed as
   \begin{multline}\label{eqn:outage_k2}
p_{out,2} = 
 {e^{ - \left( {{2^{{b_2}/{N_2}}} - 1} \right){\sigma ^2}/{P_2}}} - {e^{ - \left( {{2^{\left( {{b_1} + {b_2}} \right)/{N_2}}} - 1} \right){\sigma ^2}/{P_2}}}\\
 + \left( {1 - {e^{ - \left( {{2^{{b_1}/{N_1}}} - 1} \right){\sigma ^2}/{P_1}}}} \right)\left( {1 - {e^{ - \left( {{2^{{b_2}/{N_2}}} - 1} \right){\sigma ^2}/{P_2}}}} \right)\\
  - \underbrace {\frac{{{\sigma ^2}}}{{{P_2}}}{e^{\frac{{{\sigma ^2}}}{{{P_1}}}}}\int\limits_{{2^{\frac{{{b_2}}}{{{N_2}}}}} - 1}^{{2^{\frac{{{b_1} + {b_2}}}{{{N_2}}}}} - 1} {{e^{ - \frac{{{\sigma ^2}{2^{\frac{{{b_1} + {b_2} - {N_2}{{\log }_2}(1 + {\gamma _2})}}{{{N_1}}}}}}}{{{P_1}}}}}{e^{ - \frac{{{\sigma ^2}}}{{{P_2}}}{\gamma _2}}}d} {\gamma _2}}_{\varphi ({b_1},{b_2},{N_1},{N_2})}.
\end{multline}

By using the integral of Mellin-Branes type of the exponential function ${e^{ - u}} = \frac{1}{{2\pi {\mathop{\rm i}\nolimits} }}\int_{c - {\rm i}\infty }^{c + {\rm i}\infty } {\Gamma (s){u^{ - s}}ds} $ \cite[eq. 1.37]{mathai2009h}, 
$\varphi ({b_1},{b_2},{N_1},{N_2})$ can be expressed as
\begin{align}\label{eqn:outage_phi}
&\varphi \left( {{b_1},{b_2},{N_1},{N_2}} \right) \notag\\
&= {e^{\frac{{{\sigma ^2}}}{{{P_1}}} + \frac{{{\sigma ^2}}}{{{P_2}}}}}\frac{1}{{2\pi {\rm i}}}\int\limits_{c - {\rm i}\infty }^{c + {\rm i}\infty } {\Gamma (s){{\left( {\frac{{{\sigma ^2}{2^{\frac{{{b_1} + {b_2}}}{{{N_1}}}}}}}{{{P_1}}}{{\left( {\frac{{{\sigma ^2}}}{{{P_2}}}} \right)}^{\frac{{{N_2}}}{{{N_1}}}}}} \right)}^{ - s}}}\times \notag\\
&  \left( {\Gamma \left( {1 + \frac{{{N_2}}}{{{N_1}}}s,\frac{{{\sigma ^2}}}{{{P_2}}}{2^{\frac{{{b_2}}}{{{N_2}}}}}} \right) - \Gamma \left( {1 + \frac{{{N_2}}}{{{N_1}}}s,\frac{{{\sigma ^2}}}{{{P_2}}}{2^{\frac{{{b_1} + {b_2}}}{{{N_2}}}}}} \right)} \right)ds,
\end{align}
where $\Gamma(a)$ and $\Gamma(a,b)$ refer to the Gamma function and the upper incomplete Gamma function \cite[eq. 3.381.3]{gradshteyn1965table}. Furthermore, 
By using the definition of the generalized upper incomplete Fox's H function \cite{yilmaz2009productshifted}, \eqref{eqn:outage_phi} can be obtained in closed-form, with which \eqref{eqn:outage_k2} can be finally derived as \eqref{outage_k2}.

\section{Proof of Theorem \ref{THEOREM ASY}}\label{asy}
With \eqref{eqn:out_k_1_exa}, the asymptotic outage probability for $K=1$ can be easily obtained as \eqref{eqn:k_1_asy} by employing Taylor expansion of exponential function. Moreover, the asymptotic outage probabilities for $K=2$ and $K>2$ are individually derived in the following.
\subsection{The Case of $K=2$}
According to \eqref{outage_k2} in Theorem \ref{THEOREM ACCK2}, by applying residue theorem to generalized Fox's H function, \eqref{eqn:outage_phi} in the high SNR regime is rewritten as \eqref{eqn:phi_res}, as shown at the top of the next page, where
\begin{figure*}
    \begin{align}\label{eqn:phi_res}
\varphi \left( {{b_1},{b_2},{N_1},{N_2}} \right) = &{e^{\frac{{{\sigma ^2}}}{{{P_1}}} + \frac{{{\sigma ^2}}}{{{P_2}}}}}\sum\limits_{j =  - \infty }^0 {{\rm Res}\left\{ {\Gamma (s)\Gamma \left( {\frac{{{N_2}}}{{{N_1}}}s + 1,{2^{\frac{{{b_2}}}{{{N_2}}}}}\frac{{{\sigma ^2}}}{{{P_2}}}} \right){{\left( {{2^{\frac{{{b_1} + {b_2}}}{{{N_1}}}}}\frac{{{\sigma ^2}}}{{{P_1}}}{{\left( {\frac{{{\sigma ^2}}}{{{P_2}}}} \right)}^{\frac{{{N_2}}}{{{N_1}}}}}} \right)}^{ - s}},s=j} \right\}  }\notag\\
&-{{\rm Res}\left\{ {\Gamma (s)\Gamma \left( {\frac{{{N_2}}}{{{N_1}}}s + 1,{2^{\frac{{{b_1} + {b_2}}}{{{N_2}}}}}\frac{{{\sigma ^2}}}{{{P_2}}}} \right){{\left( {{2^{\frac{{{b_1} + {b_2}}}{{{N_1}}}}}\frac{{{\sigma ^2}}}{{{P_1}}}{{\left( {\frac{{{\sigma ^2}}}{{{P_2}}}} \right)}^{\frac{{{N_2}}}{{{N_1}}}}}} \right)}^{ - s}},s=j} \right\}},
\end{align}
\hrulefill
\end{figure*}
${\mathop{\rm Res}\nolimits} (f(s),{s=s_j})$ denotes the residue of $f(s)$ at pole ${s_j}$. By applying dominant term approximation 
to \eqref{eqn:phi_res}, only the residues at at ${s} = 0$ and ${s} = -1$ are kept. The asymptotic expression of $\varphi \left( {{b_1},{b_2},{N_1},{N_2}} \right)$ as $P_1/\sigma^2, P_2/\sigma^2\to \infty$ can be obtained as
\begin{align}\label{eqn:varphi_asy_res}
&\varphi \left( {{b_1},{b_2},{N_1},{N_2}} \right) \simeq {e^{\frac{{{\sigma ^2}}}{{{P_1}}} + \frac{{{\sigma ^2}}}{{{P_2}}}}} \times \notag\\
&\left( {\begin{array}{*{20}{l}}
{\Gamma \left( {1,\frac{{{\sigma ^2}}}{{{P_2}}}{2^{\frac{{{b_2}}}{{{N_2}}}}}} \right) - \Gamma \left( {1,\frac{{{\sigma ^2}}}{{{P_2}}}{2^{\frac{{{b_1} + {b_2}}}{{{N_2}}}}}} \right) - }\\
{\Gamma \left( {1 - \frac{{{N_2}}}{{{N_1}}},\frac{{{\sigma ^2}}}{{{P_2}}}{2^{\frac{{{b_2}}}{{{N_2}}}}}} \right)\left( {{2^{\frac{{{b_1} + {b_2}}}{{{N_1}}}}}\frac{{{\sigma ^2}}}{{{P_1}}}{{\left( {\frac{{{\sigma ^2}}}{{{P_2}}}} \right)}^{\frac{{{N_2}}}{{{N_1}}}}}} \right) + }\\
{\Gamma \left( {1 - \frac{{{N_2}}}{{{N_1}}},\frac{{{\sigma ^2}}}{{{P_2}}}{2^{\frac{{{b_1} + {b_2}}}{{{N_2}}}}}} \right)\left( {{2^{\frac{{{b_1} + {b_2}}}{{{N_1}}}}}\frac{{{\sigma ^2}}}{{{P_1}}}{{\left( {\frac{{{\sigma ^2}}}{{{P_2}}}} \right)}^{\frac{{{N_2}}}{{{N_1}}}}}} \right)}
\end{array}} \right),
\end{align}
wherein $\Gamma \left( {1, x } \right)=e^{-x}$ \cite[eq. 8.352.7]{gradshteyn1965table}. Furthermore, since the asymptotic expression of $\Gamma \left( {1 - \frac{{{N_2}}}{{{N_1}}},x} \right)$ as $x\to 0^+$ is given by \cite[eq.8.4.15, eq.8.7.3]{olver2010nist}
\begin{multline}\label{eqn:asy_incom_gamma}
    \Gamma \left( {1 - \frac{N_2}{N_1},x} \right) \simeq \\
    \left\{ {\begin{array}{*{20}{c}}
{\frac{{{N_1}}}{{{N_2} - {N_1}}}{x^{1 - {N_2}/{N_1}}},}&{\frac{{{N_2}}}{{{N_1}}} >1}\\
{{\frac{{{N_1}}}{{{N_2} - {N_1}}}{x^{1 - {N_2}/{N_1}}} + \Gamma \left( {1 - \frac{{{N_2}}}{{{N_1}}}} \right),}}&{\frac{{{N_2}}}{{{N_1}}}<1}
\end{array}} \right.,
\end{multline}
\eqref{eqn:varphi_asy_res} can be simplified as 
\begin{multline}\label{eqn:varphi_asy_fin}
\varphi \left( {{b_1},{b_2},{N_1},{N_2}} \right) \simeq{e^{\frac{{{\sigma ^2}}}{{{P_1}}} + \frac{{{\sigma ^2}}}{{{P_2}}}}}\left( {{e^{ - \frac{{{\sigma ^2}}}{{{P_2}}}{2^{\frac{{{b_2}}}{{{N_2}}}}}}} - {e^{ - \frac{{{\sigma ^2}}}{{{P_2}}}{2^{\frac{{{b_1} + {b_2}}}{{{N_2}}}}}}}} \right)\\
 + \frac{{{N_1}}}{{{N_1} - {N_2}}}{e^{\frac{{{\sigma ^2}}}{{{P_1}}} + \frac{{{\sigma ^2}}}{{{P_2}}}}}\frac{{{\sigma ^4}}}{{{P_1}{P_2}}}{2^{\frac{{{b_2}}}{{{N_2}}}}}\left( {{2^{\frac{{{b_1}}}{{{N_1}}}}} - {2^{\frac{{{b_1}}}{{{N_2}}}}}} \right).
\end{multline}
By substituting (\ref{eqn:varphi_asy_fin}) into (\ref{eqn:outage_k2}) along with Taylor expansion of the exponential function, ignoring the higher-order terms leads to \eqref{eqn:asy_2}.

\subsection{The Case of $K>2$}
On the basis of \eqref{eqn:out_for}, ${p_{out,K}}$ is rewritten as
\begin{align}\label{eqn:out_gen_mul}
p_{out,K} = \Pr \left( {{{(1 + {\gamma _1})}^{{N_1}}} < {2^{{b_1}}}, \cdots ,\prod\limits_{k = 1}^K {{{(1 + {\gamma _k})}^{{N_k}}}}  < {2^{b_K^\Sigma }}} \right)
\end{align}
By making the change of variable ${x_k} = \prod\nolimits_{l = 1}^k {{{(1 + {\gamma _l})}^{{N_l}}}}  $, \eqref{eqn:out_gen_mul} can be readily derived as
\begin{align}\label{eqn:outage_mul}
&{p_{out,K}} = \Pr \left( {{x_0} < {x_1} < {2^{{b_1}}}, \cdots ,{x_{K - 1}} < {x_K} < {2^{b_K^\Sigma }}} \right)\notag\\
& = \int\limits_{{x_0}}^{{2^{{b_1}}}} { \cdots \int\limits_{{x_{K - 1}}}^{{2^{b_K^\Sigma }}} {{f_{{x_1}, \cdots ,{x_K}}}({x_1}, \cdots ,{x_K})} } d{x_1} \cdots d{x_K},
\end{align}
where $x_0=1$. To proceed with the derivation, it is of necessity to obtain the joint PDF of ${\bf x} = (x_1, \cdots, {{x}_{K}})$. Since the joint PDF of ${\bs \gamma}=({{\gamma }_{1}},\cdots,{{\gamma }_{K}})$ is given by ${f_{{\bs \gamma}}}\left( {{\gamma _1}, \cdots ,{\gamma _K}} \right) =  \prod\nolimits_{k = 1}^K {{\sigma ^2}\exp \left( { - {\sigma ^2}{\gamma _k}/{P_k}} \right)/{P_k}} $, 
the joint PDF of $\bf x$ can be obtained by using Jacobian transform as ${{f}_{\bf x}}({{x}_{1}},\cdots ,{{x}_{K}})={{f}_{{\bs \gamma}}}({{\gamma }_{1}},\cdots ,{{\gamma }_{K}})|\det({\bf J})|$, where the matrix $\bf J$ reads as
\begin{align}\label{eqn:jacobi}
{\bf{J}} = \left( {\begin{array}{*{20}{c}}
{\frac{{{x_1}^{\frac{1}{{{N_1}}} - 1}}}{{{N_1}}}}&0& \cdots &0\\
{ - \frac{{{x_2}^{\frac{1}{{{N_2}}}}}}{{{N_2}{x_1}^{\frac{1}{{{N_2}}} + 1}}}}&{\frac{{{x_2}^{\frac{1}{{{N_2}}} - 1}}}{{{N_2}{x_1}^{\frac{1}{{{N_2}}}}}}}& \cdots &0\\
0& \vdots & \ddots & \vdots \\
0&0& \cdots &{\frac{{{x_K}^{\frac{1}{{{N_K}}} - 1}}}{{{N_K}{x_{K - 1}}^{\frac{1}{{{N_K}}}}}}}
\end{array}} \right).
\end{align}
Thus, we arrive at ${f_{\bf{x}}}({x_1}, \cdots ,{x_K}) = \prod\nolimits_{k = 1}^K {\frac{{{\sigma ^2}}}{{{P_k}}}{x_k}^{\frac{1}{{{N_k}}} - 1}\exp \left( { - \frac{{{\sigma ^2}}}{{{P_k}}}\left( {{{\left( {\frac{{{x_k}}}{{{x_{k - 1}}}}} \right)}^{\frac{1}{{{N_k}}}}} - 1} \right)} \right)/} {N_k}{x_{k - 1}}^{\frac{1}{{{N_k}}}} $. By substituting the joint PDF of ${\bf x}$ into \eqref{eqn:outage_mul}, $p_{out,K}$ can be obtained as
\begin{align}\label{eqn:out_toxk}
&p_{out,K} =
  \prod\limits_{k = 1}^K {\frac{{{\sigma ^2}}}{{{N_k}{P_k}}}{e^{\frac{{{\sigma ^2}}}{{{P_k}}}}}}\times \notag\\
 & \int\limits_{x_0}^{{2^{{b_1}}}} { \cdots \int\limits_{{x_{K - 1}}}^{{2^{b_K^\Sigma }}} {\prod\limits_{k = 1}^K {\frac{{{x_k}^{\frac{1}{{{N_k}}} - 1}}}{{{x_{k - 1}}^{\frac{1}{{{N_k}}}}}}{e^{ - \frac{{{\sigma ^2}}}{{{P_k}}}{{\left( {\frac{{{x_k}}}{{{x_{k - 1}}}}} \right)}^{\frac{1}{{{N_k}}}}}}}} } } d{x_1} \cdots d{x_K}.
\end{align}
Applying Taylor expansion of the exponential functions to \eqref{eqn:out_toxk} yields
\begin{align}\label{eqn:outage_MUL}
p_{out,K} &=
\prod\limits_{k = 1}^K {\frac{{{\sigma ^2}}}{{{N_k}{P_k}}}{e^{\frac{{{\sigma ^2}}}{{{P_k}}}}}\sum\limits_{{n_1}, \cdots ,{n_K}_{ = 0}}^\infty  {\left( {\prod\limits_{k = 1}^K {\frac{{{{\left( { - \frac{{{\sigma ^2}}}{{{P_k}}}} \right)}^{{n_k}}}}}{{{n_k}!}}} } \right)} } \notag\\
 &\times \int\limits_{x_0}^{{2^{{b_1}}}} { \cdots \int\limits_{{x_{K - 2}}}^{{2^{b_{K - 1}^\Sigma }}} {\prod\limits_{k = 1}^{K - 1} {{x_k}^{{\frac{{{n_k} + 1}}{{{N_k}}} - \frac{{{n_{k + 1}} + 1}}{{{N_{k + 1}}}} - 1}}} } } d{x_1} \cdots d{x_{K - 1}}\notag\\
& \times \left( {\frac{{{N_K}}}{{{n_K} + 1}}} \right)\left( {{2^{b_K^\Sigma \frac{{{n_K} + 1}}{{{N_K}}}}} - {x_{K - 1}}^{\frac{{{n_K} + 1}}{{{N_K}}}}} \right).
\end{align}
With the representation of the infinite summation in \eqref{eqn:outage_MUL}, the asymptotic outage probability at high SNR (i.e., ${P_1/\sigma^2} , \cdots  , {P_K/\sigma^2}\to \infty$) can be derived by removing the higher-order terms of $\prod\nolimits_{k=1}^{K}{{{\sigma^2/P_k}}}$. Clearly, the dominant term in \eqref{eqn:outage_MUL} is the one with ${{n}_{1}}=\cdots={{n}_{K}}=0$. Therefore, the asymptotic outage probability ${p_{out,K}}$ in the high SNR regime for $K>2$ is given by \eqref{eqn:outage_asy_mul}, as shown at the top of the next page.
\begin{figure*}
\begin{align}\label{eqn:outage_asy_mul}
{p_{out,K}} \simeq \prod\limits_{k = 1}^K {\frac{{{\sigma ^2}}}{{{N_k}{P_k}}}} \underbrace {\int\limits_{{x_0}}^{{2^{{b_1}}}} { \cdots \int\limits_{{x_{K - 2}}}^{{2^{b_{K - 1}^\Sigma }}} {\prod\limits_{k = 1}^{K - 1} {{x_k}^{\frac{1}{{{N_k}}} - \frac{1}{{{N_{k + 1}}}} - 1}} {N_K}\left( {{2^{\frac{{b_K^\Sigma }}{{{N_K}}}}} - {x_{K - 1}}^{\frac{1}{{{N_K}}}}} \right)} } d{x_1} \cdots d{x_{K - 1}}}_{{\hbar_{K,1}}({x_0})}.
\end{align}
\hrulefill
\end{figure*}


However, ${{\hbar_{K,1}}({x_0})}$ is a multi-fold integral that entails a high computational complexity on its numerical evaluation. Owing to the recursive form of ${\hbar_{K,k}}({x_{k - 1}})$ from \eqref{eqn:outage_asy_mul}, i.e.,
\begin{align}\label{eqn:h_proof}
{\hbar_{K,k}}({x_{k - 1}}) &= \int\nolimits_{{x_{k - 1}}}^{{2^{b_k^\Sigma }}} {{x_k}^{\frac{1}{{{N_k}}} - \frac{1}{{{N_{k + 1}}}} - 1}{h_{K,k + 1}}({x_k})} d{x_k},
\end{align}
${\hbar_{K,k}}({x_{k - 1}})$ can be obtained by induction. To start with the proof, we stipulate ${\hbar_{K,K}}({x_{K - 1}}) = {N_K}( {{2^{{{b_K^\Sigma }}/{{{N_K}}}}} - {x_{K - 1}}^{{1}/{{{N_K}}}}})$ on the basis of \eqref{eqn:outage_asy_mul}. Clearly, \eqref{eqn:out_h_def} holds for $k=K$.
To prove the recursive relationship in Theorem \ref{THEOREM ASY}, we assume that \eqref{eqn:out_h_def} holds for $k+1$, that is, ${\hbar_{K,k + 1}}({x_k})$ can be written as 
\begin{align}\label{eqn:h_k}
{\hbar_{K,k + 1}}({x_k})  &= {\left( { - 1} \right)^{K - k}}{x_k}^{\frac{1}{{{N_{k + 1}}}}}\prod\limits_{i = 1}^{K - k} {{N_{K - i + 1}}}   \notag\\
&+ \sum\limits_{i = 0}^{K - k - 1} {{c_{k + 1,i}}{x_k}^{\frac{1}{{{N_{k + 1}}}} - \frac{1}{{{N_{k + 1 + i}}}}}}.
\end{align}
Plugging \eqref{eqn:h_k} into \eqref{eqn:h_proof} results in 
\begin{align}\label{eqn:h_proof1}
{\hbar_{K,k}}({x_{k - 1}}) 
& = {\left( { - 1} \right)^{K - k}}{N_k}{2^{\frac{{b_k^\Sigma }}{{{N_k}}}}}\prod\limits_{i = 1}^{K - k} {{N_{K - i + 1}}}  \notag\\
&+ \sum\limits_{i = 1}^{K - k} {{c_{k + 1,i - 1}}\frac{{{N_k}{N_{k + i}}}}{{{N_{k + i}} - {N_k}}}{2^{\frac{{b_k^\Sigma }}{{{N_k}}} - \frac{{b_k^\Sigma }}{{{N_{k + i}}}}}}} \notag\\
& - \sum\limits_{i = 1}^{K - k} {{c_{k + 1,i - 1}}\frac{{{N_k}{N_{k + i}}}}{{{N_{k + i}} - {N_k}}}{x_{k - 1}}^{\frac{1}{{{N_k}}} - \frac{1}{{{N_{k + i}}}}}}  \notag\\
&+ {\left( { - 1} \right)^{K - k + 1}}{N_k}{x_{k - 1}}^{\frac{1}{{{N_k}}}}\prod\limits_{i = 1}^{K - k} {{N_{K - i + 1}}}.
\end{align}
By comparing \eqref{eqn:h_proof1} and \eqref{eqn:out_h_def}, the recursive relationship between the coefficients ${c_{k,i}}$ can be derived. Thus we complete the proof.

\section{Proof of Theorem \ref{THEOREM_III}}\label{outage_convex_proof}
By relaxing $b_r$ to a real number, we use the function $f_K(b_1,\cdots,b_K)$ to define $\prod\nolimits_{k=1}^K N_k^{-1}{{\hbar_{K,1}}({x_0})}$ for notational simplicity, i.e,
\begin{multline}\label{f_z}
{f_K}\left( {{b_1}, \cdots ,{b_K}} \right) =\\
\prod\limits_{k=1}^K \frac{1}{N_k}\int\limits_{{x_0}}^{{2^{{b_k}}}} { \cdots \int\limits_{{x_{K - 2}}}^{{2^{\sum\nolimits_{k = 1}^{K - 1} {{b_k}} }}} {\prod\limits_{k = 1}^{K - 1} {{x_k}^{\frac{1}{{{N_k}}} - \frac{1}{{{N_{k + 1}}}} - 1}} } } \\
\times {N_K}\left( {{2^{\frac{{\sum\nolimits_{k = 1}^{{K}} {{b_k}} }}{{{N_K}}}}} - {x_{K - 1}}^{\frac{1}{{{N_K}}}}} \right)d{x_1} \cdots d{x_{K - 1}}.
\end{multline}
According to \eqref{eqn:outage_asy_mul}, the asymptotic outage probability can be expressed as
\begin{equation}\label{eqn:p_out_tilde}
    {\tilde p_{out,K}}(\mathbf b) = \prod\limits_{k = 1}^K {\frac{{{\sigma ^2}}}{{{P_k}}}} {f_K}\left( {{b_1}, \cdots ,{b_K}} \right).
\end{equation}
To prove Theorem \ref{THEOREM_III}, it suffices to show the increasing monotonicity and the convexity of $f_K(b_1,\cdots,b_K)$ with respect to $b_r$ given $b_1,\cdots,b_{r-1},b_{r+1},\cdots,b_K$. Henceforth, it amounts to determining that the first-order and second-order partial derivatives of $ {f_K}\left( {{b_1}, \cdots ,{b_K}} \right)$ are larger than or equal to zero.
To this end, on the basis of \eqref{eqn:out_for}, ${p_{out,K}}$ is rewritten as 
\begin{multline}\label{out_convex}
{p_{out,K}} =
\Pr \left( {\ln \left( {1 + \frac{{{P_1}{{\left| {{h_1}} \right|}^2}}}{{{\sigma ^2}}}} \right) < {b_1}\ln 2} \right.\\
\left. {, \cdots ,\ln \prod\limits_{k = 1}^K {\left( {1 + \frac{{{P_k}{{\left| {{h_k}} \right|}^2}}}{{{\sigma ^2}}}} \right)}  < \sum\limits_{k = 1}^K {{b_k}} \ln 2} \right).
\end{multline}
Since the random variables ${\left| {{h_1}} \right|^2},\cdots,{\left| {{h_K}} \right|^2}$ follow exponential distributions, \eqref{out_convex} can be explicitly expressed as \eqref{out_convex1}, as shown at the top of the following page.
\begin{figure*}
\begin{align}\label{out_convex1}
{p_{out,K}} = \int\nolimits_{\ln {{\left( {1 + \frac{{{P_1}{x_1}}}{{{\sigma ^2}}}} \right)}^{{N_1}}} < {b_1}\ln 2} { \cdots \int\nolimits_{\ln \prod\limits_{k = 1}^K {{{\left( {1 + \frac{{{P_k}{x_k}}}{{{\sigma ^2}}}} \right)}^{{N_k}}} < \sum\limits_{k = 1}^K {{b_k}} \ln 2} } {\prod\limits_{k = 1}^K {{e^{ - {x_k}}}} d{x_1} \cdots d{x_K}} }
\end{align}
\hrulefill
\end{figure*}
From \eqref{out_convex1}, the domain of integration of $x_1,\cdots,x_K$ should satisfy the constraints $\ln \prod\nolimits_{l = 1}^k {{{\left( {1 + \frac{{{P_l}{x_l}}}{{{\sigma ^2}}}} \right)}^{{N_l}}}}  < \ln 2\sum\nolimits_{l = 1}^k {{b_l}}$. Clearly, in the high SNR, as ${P_1/\sigma^2} , \cdots  , {P_K/\sigma^2}\to \infty$, $x_1,\cdots,x_K$ approach to zero to meet these constraints.
Hence, the integrand $e^{-x_k}$ in \eqref{out_convex1} tends to 1.  \eqref{out_convex1} is thus asymptotically equal to
\begin{multline}\label{out_convex2}
  p_{out,K} \simeq \\
  \int\limits_0^\infty  { \cdots \int\limits_0^\infty  {\prod\limits_{k = 1}^K {u\left( {\sum\limits_{l = 1}^k {{b_l}} \ln 2 - \ln \prod\limits_{l = 1}^k {{{\left( {1 + \frac{{{P_l}}}{{{\sigma ^2}}}{x_l}} \right)}^{{N_l}}}} } \right)} } } \\
  d{x_1} \cdots d{x_K},
\end{multline}
where $u( \cdot )$ stands for the unit step function. By making use of the inverse Laplace transform of $u( x )$, i.e., $u(x) = \frac{1}{{2\pi {\mathop{\rm i}\nolimits} }}\int_{c - {\mathop{\rm i}\nolimits} \infty }^{c + {\mathop{\rm i}\nolimits} \infty } {\frac{1}{s}{e^{sx}}} ds$, and after some algebraic manipulations, \eqref{out_convex2} can be rewritten as
\begin{align}\label{out_convex3}
    p_{out,K} &\simeq \prod\limits_{k = 1}^K {\frac{{{\sigma ^2}}}{{{P_k}}}}{\left( {\frac{1}{{2\pi {\rm i}}}} \right)^K} \notag\\
    &\times \int\limits_{{c_1} - {\rm{i}}\infty }^{{c_1} + {\rm{i}}\infty } { \cdots \int\limits_{{c_K} - {\rm{i}}\infty }^{{c_K} + {\rm{i}}\infty } {\prod\limits_{k = 1}^K {\frac{1}{{{s_k}\left( {{N_k}\sum\nolimits_{l = k}^K {{s_l}}  - 1} \right)}}} } } \notag\\
    &\times {2^{{s_k}\sum\limits_{l = 1}^k {{b_l}} }}d{s_1} \cdots d{s_K},
\end{align}
where ${\rm i}=\sqrt{-1}$. By identifying \eqref{out_convex3} with \eqref{eqn:p_out_tilde} and \eqref{out_convex1}, $f_K(b_1,\cdots,b_K)$ can also be written as \eqref{out_convex4}, as shown at the top of the next page.
\begin{figure*}
\begin{align}\label{out_convex4}
{f_K}\left( {{b_1}, \cdots ,{b_K}} \right) &= {\left( {\frac{1}{{2\pi {\rm{i}}}}} \right)^K}\int\nolimits_{{c_1} - {\rm{i}}\infty }^{{c_1} + {\rm{i}}\infty } { \cdots \int\nolimits_{{c_K} - {\rm{i}}\infty }^{{c_K} + {\rm{i}}\infty } {\prod\limits_{k = 1}^K {\frac{1}{{{s_k}\left( {{N_k}\sum\nolimits_{l = k}^K {{s_l}}  - 1} \right)}}{2^{{s_k}\sum\nolimits_{l = 1}^k {{b_l}} }}} } } d{s_1} \cdots d{s_K}\notag\\
& = \int\nolimits_{\ln {{\left( {1 + {x_1}} \right)}^{{N_1}}} < {b_1}\ln 2} { \cdots \int\nolimits_{\ln \prod\nolimits_{k = 1}^K {{{\left( {1 + {x_k}} \right)}^{{N_k}}} < \sum\nolimits_{k = 1}^K {{b_k}} \ln 2} } {d{x_1} \cdots d{x_K}} } .
\end{align}
\hrulefill
\end{figure*}
From \eqref{out_convex4}, it is readily found that ${f_K}\left( {{b_1}, \cdots ,{b_K}} \right)$ is an increasing function of ${b_{r}}$ by fixing ${b_1}, \cdots ,{b_{r - 1}},{b_{b + 1}}, \cdots ,{b_K}$, because the domain of the integration is enlarged as ${b_{r}}$ increases. Accordingly, it is proved that the first partial derivative of ${f_K}\left( {{b_1}, \cdots ,{b_K}} \right)$ w.r.t. ${b_{r}}$ is larger than or equal to zero, i.e., $\frac{{\partial {f_K}\left( {{b_1}, \cdots ,{b_K}} \right)}}{{\partial {b_r}}} \ge 0$, wherein
\begin{multline}\label{out_convex5}
\frac{{\partial {f_K}\left( {{b_1}, \cdots ,{b_K}} \right)}}{{\partial {b_r}}} = \ln 2{\left( {\frac{1}{{2\pi {\rm{i}}}}} \right)^K} \int\limits_{{c_1} - {\rm{i}}\infty }^{{c_1} + {\rm{i}}\infty } { \cdots \int\limits_{{c_K} - {\rm{i}}\infty }^{{c_K} + {\rm{i}}\infty } {\sum\limits_{l = r}^K {{s_l}} } }   \\
 \times\prod\limits_{k = 1}^K {\frac{{{2^{{s_k}\sum\limits_{l = 1}^k {{b_l}} }}}}{{{s_k}\left( {{N_k}\sum\limits_{l = k}^K {{s_l}}  - 1} \right)}}} d{s_1} \cdots d{s_K},
\end{multline}
Then taking the second-order partial derivative of ${f_K}\left( {{b_1}, \cdots ,{b_K}} \right)$ with respect to ${b_{r}}$ gives
\begin{multline}\label{out_convex6}
\frac{{{\partial ^2}{f_K}\left( {{b_1}, \cdots ,{b_K}} \right)}}{{\partial b_r^2}} = {\left( {\ln 2} \right)^2}{\left( {\frac{1}{{2\pi {\rm{i}}}}} \right)^K}\\
 \times \int_{{c_1} - {\rm{i}}\infty }^{{c_1} + {\rm{i}}\infty } { \cdots \int_{{c_K} - {\rm{i}}\infty }^{{c_K} + {\rm{i}}\infty } {{{\left( {\sum\limits_{l = r}^K {{s_l}} } \right)}^2}} } \\
\times \prod\limits_{k = 1}^K {\frac{{{2^{{s_k}\sum\limits_{l = 1}^k {{b_l}} }}}}{{{s_k}\left( {{N_k}\sum\limits_{l = k}^K {{s_l}}  - 1} \right)}}} d{s_1} \cdots d{s_K}.
\end{multline}
By using the following identity $$\sum\limits_{l = r}^K {{s_l}}  = \frac{{{N_r}\sum\nolimits_{l = r}^K {{s_l}}  - 1}}{{{N_r}}} + \frac{1}{{{N_r}}},$$
\eqref{out_convex6} can be rewritten as
\begin{align}\label{out_convex6}
\frac{{{\partial ^2}{f_K}\left( {{b_1}, \cdots ,{b_K}} \right)}}{{\partial b_r^2}} = &\frac{{\ln 2}}{{{N_r}}}\frac{{\partial {f_{K/r}}\left( {{b_1}, \cdots ,{b_K}} \right)}}{{\partial {b_r}}}\notag\\
&+ \frac{{\ln 2}}{{{N_r}}}\frac{{\partial {f_K}\left( {{b_1}, \cdots ,{b_K}} \right)}}{{\partial {b_r}}},
\end{align}
where ${{\partial {f_{K/r}}\left( {{z_1}, \cdots ,{z_K}} \right)}}/{{\partial {z_r}}}$ is given by
\begin{multline}\label{out_convex7}
\frac{{\partial {f_{K/r}}\left( {{z_1}, \cdots ,{z_K}} \right)}}{{\partial {z_r}}} = \ln 2{\left( {\frac{1}{{2\pi {\rm{i}}}}} \right)^K} \\
 \times \int_{{c_1} - {\rm{i}}\infty }^{{c_1} + {\rm{i}}\infty } { \cdots \int_{{c_K} - {\rm{i}}\infty }^{{c_K} + {\rm{i}}\infty } {\sum\nolimits_{l = r}^K {{s_l}} \prod\limits_{k = 1,k \ne r}^K {\frac{1}{{{N_k}\sum\nolimits_{l = k}^K {{s_l}}  - 1}}} } }  \\
 \times \prod\limits_{k = 1}^K {\frac{{{2^{{s_k}\sum\nolimits_{l = 1}^k {{b_l}} }}}}{{{s_k}}}} d{s_1} \cdots d{s_K},
\end{multline}
where ${f_{K/r}}\left( {{b_1}, \cdots ,{b_K}} \right)$ is defined by
\begin{equation}
    {{\tilde p}_{out,K/r}}({\bf{b}}) = \prod\limits_{k = 1,k \ne r}^K {\frac{{{\sigma ^2}}}{{{P_k}}}} {f_{K/r}}\left( {{b_1}, \cdots ,{b_K}} \right),
\end{equation}
and ${{\tilde p}_{out,K/r}}({\bf{b}})$ is the asymptotic outage probability given no power consumption in the $r$-th HARQ round, i.e., $P_r=0$. Hence, similarly to \eqref{out_convex4}, ${f_{K/r}}\left( {{b_1}, \cdots ,{b_K}} \right)$ is explicitly given by \eqref{eqn:fKsubr}, as shown at the top of the next page.
\begin{figure*}
    \begin{align}\label{eqn:fKsubr}
{f_{K/r}}\left( {{b_1}, \cdots ,{b_K}} \right) =& \int_{\ln {{\left( {1 + {x_1}} \right)}^{{N_1}}} < {b_1}\ln 2} { \cdots \int_{\ln \prod\nolimits_{k = 1,k \ne r}^K {{{\left( {1 + {x_k}} \right)}^{{N_k}}}}  < \sum\nolimits_{k = 1}^K {{b_k}} \ln 2} {{e^{ - {x_r}}}d{x_1} \cdots d{x_K}} } \notag\\
 =& {\left( {\frac{1}{{2\pi {\rm{i}}}}} \right)^K}\int_{{c_1} - {\rm{i}}\infty }^{{c_1} + {\rm{i}}\infty } { \cdots \int_{{c_K} - {\rm{i}}\infty }^{{c_K} + {\rm{i}}\infty } {\prod\limits_{k = 1,k \ne r}^K {\frac{1}{{{N_k}\sum\nolimits_{l = k}^K {{s_l}}  - 1}}} } } \prod\limits_{k = 1}^K {\frac{{{2^{{s_k}\sum\nolimits_{l = 1}^k {{b_l}} }}}}{{{s_k}}}} d{s_1} \cdots d{s_K}.
    \end{align}
    \hrulefill
\end{figure*}
Clearly, ${{\partial {f_{K/r}}\left( {{b_1}, \cdots ,{b_K}} \right)}}/{{\partial {b_r}}} \ge 0$ in analogous to \eqref{out_convex5}. Substituting this result into \eqref{out_convex6} easily demonstrates ${{\partial^2 {f_{K}}\left( {{b_1}, \cdots ,{b_K}} \right)}}/{{\partial {b_r^2}}} \ge 0$. Thus we finally prove the increasing monotonicity and convexity of the asymptotic outage probability.

\bibliographystyle{IEEEtran}
\bibliography{VL_XP_manuscript}

\begin{thebibliography}{10}
\providecommand{\url}[1]{#1}
\csname url@samestyle\endcsname
\providecommand{\newblock}{\relax}
\providecommand{\bibinfo}[2]{#2}
\providecommand{\BIBentrySTDinterwordspacing}{\spaceskip=0pt\relax}
\providecommand{\BIBentryALTinterwordstretchfactor}{4}
\providecommand{\BIBentryALTinterwordspacing}{\spaceskip=\fontdimen2\font plus
\BIBentryALTinterwordstretchfactor\fontdimen3\font minus
  \fontdimen4\font\relax}
\providecommand{\BIBforeignlanguage}[2]{{%
\expandafter\ifx\csname l@#1\endcsname\relax
\typeout{** WARNING: IEEEtran.bst: No hyphenation pattern has been}%
\typeout{** loaded for the language `#1'. Using the pattern for}%
\typeout{** the default language instead.}%
\else
\language=\csname l@#1\endcsname
\fi
#2}}
\providecommand{\BIBdecl}{\relax}
\BIBdecl

\bibitem{ahmed2021hybrid}
A.~Ahmed, A.~Al-Dweik, Y.~Iraqi, H.~Mukhtar, M.~Naeem, and E.~Hossain, ``Hybrid
  automatic repeat request ({HARQ}) in wireless communications systems and
  standards: A contemporary survey,'' \emph{IEEE Commun. Surv. Tut.}, vol.~23,
  no.~4, pp. 2711--2752, Jul. 2021.

\bibitem{shen2022energy}
K.~Shen, W.~Yu, X.~Chen, and S.~R. Khosravirad, ``Energy efficient {HARQ} for
  ultrareliability via novel outage probability bound and geometric
  programming,'' \emph{IEEE Trans. Wireless Commun.}, vol.~21, no.~9, pp.
  7810--7820, Apr. 2022.

\bibitem{zhang2022performance}
H.~Zhang, Z.~Liao, Z.~Shi, G.~Yang, Q.~Dou, and S.~Ma, ``Performance analysis
  of {MIMO-HARQ} assisted {V2V} communications with keyhole effect,''
  \emph{IEEE Trans. Commun.}, vol.~70, no.~5, pp. 3034--3046, Mar. 2022.

\bibitem{meng2022analysis}
S.~Meng, S.~Wu, A.~Li, J.~Jiao, N.~Zhang, and Q.~Zhang, ``Analysis and
  optimization of the {HARQ}-based spinal coded timely status update system,''
  \emph{IEEE Trans. Commun.}, vol.~70, no.~10, pp. 6425--6440, Aug. 2022.

\bibitem{shi2023outage}
Z.~Shi, H.~Wang, Y.~Fu, X.~Ye, G.~Yang, and S.~Ma, ``Outage performance and aoi
  minimization of {HARQ-IR-RIS} aided {I}o{T} networks,'' \emph{IEEE Trans.
  Commun.}, vol.~71, no.~3, pp. 1740--1754, Jan. 2023.

\bibitem{nasraoui2022energy}
L.~Nasraoui, A.~Dabbar, and M.~Siala, ``Energy and {MCS} optimization in {HARQ}
  protocol for ultrareliable regime with maximized throughput,'' \emph{IEEE
  Syst. J.}, vol.~17, no.~1, pp. 1282--1291, Apr. 2022.

\bibitem{mueadkhunthod2022design}
K.~Mueadkhunthod and W.~Phakphisut, ``Design of transmission lengths for
  {IR-HARQ} scheme using {Q}-learning,'' in \emph{Proc. IEEE 37th Int. Tech.
  Conf. Circuits/Syst., Comput. Commun. (ITC-CSCC)}.\hskip 1em plus 0.5em minus
  0.4em\relax Phuket, Thailand, Jul. 2022, pp. 880--883.

\bibitem{szczecinski2010variable}
L.~Szczecinski, C.~Correa, and L.~Ahumada, ``Variable-rate transmission for
  incremental redundancy hybrid {ARQ},'' in \emph{Proc. IEEE Global Commun.
  Conf. (GLOBECOM)}.\hskip 1em plus 0.5em minus 0.4em\relax Miami, FL, USA,
  Jan. 2010, pp. 1--5.

\bibitem{8555645}
Z.~Shi, H.~Zhang, S.~Zhong, G.~Yang, X.~Ye, and S.~Ma, ``On the performance of
  variable-rate {HARQ-IR} over {B}eckmann fading channels,'' in \emph{Proc.
  Wireless Commun. Signal Process. (WCSP)}.\hskip 1em plus 0.5em minus
  0.4em\relax Hangzhou, China, Apr. 2018, pp. 1--7.

\bibitem{leturc2019energy}
X.~Leturc, P.~Ciblat, and C.~J. Le~Martret, ``Energy efficient resource
  allocation for type-{I} {HARQ} under the rician channel,'' \emph{IEEE Trans.
  Wireless Commun.}, vol.~18, no.~7, pp. 3739--3751, May 2019.

\bibitem{shi2018energy}
Z.~Shi, S.~Ma, G.~Yang, and M.-S. Alouini, ``Energy-efficient optimization for
  {HARQ} schemes over time-correlated fading channels,'' \emph{IEEE Trans. Veh.
  Technol.}, vol.~67, no.~6, pp. 4939--4953, Mar. 2018.

\bibitem{peng2023power}
H.~Peng, T.~Kallehauge, M.~Tao, and P.~Popovski, ``Power and rate adaptation
  for {URLLC} with statistical channel knowledge and {HARQ},'' \emph{IEEE
  Wireless Commun. Lett.}, pp. 1--1, Aug. 2023.

\bibitem{assimi2009packet}
A.-N. Assimi, C.~Poulliat, and I.~Fijalkow, ``Packet combining for multi-layer
  hybrid-{ARQ} over frequency-selective fading channels,'' in \emph{Proc. IEEE
  17th Eur. Signal Process Conf. (EUSIPCO)}.\hskip 1em plus 0.5em minus
  0.4em\relax Glasgow, UK, Apr. 2009, pp. 671--675.

\bibitem{khreis2020multi}
A.~Khreis, F.~Bassi, P.~Ciblat, and P.~Duhamel, ``Multi-layer {HARQ} with
  delayed feedback,'' \emph{IEEE Tran. Wireless Commun.}, vol.~19, no.~9, pp.
  6224--6237, Jun. 2020.

\bibitem{hausl2007hybrid}
C.~Hausl and A.~Chindapol, ``Hybrid {ARQ} with cross-packet channel coding,''
  \emph{IEEE commun. lett.}, vol.~11, no.~5, pp. 434--436, May 2007.

\bibitem{jabi2017adaptive}
M.~Jabi, A.~Benyouss, M.~Le~Treust, E.~Pierre-Doray, and L.~Szczecinski,
  ``Adaptive cross-packet {HARQ},'' \emph{IEEE Trans. Commun.}, vol.~65, no.~5,
  pp. 2022--2035, May 2017.

\bibitem{jabi2017boost}
M.~Jabi, E.~Pierre-Doray, L.~Szczecinski, and M.~Benjillali, ``How to boost the
  throughput of {HARQ} with off-the-shelf codes,'' \emph{IEEE Trans. Commun.},
  vol.~65, no.~6, pp. 2319--2331, Mar. 2017.

\bibitem{jabi2018amc}
M.~Jabi, L.~Szczecinski, M.~Benjillali, A.~Benyouss, and B.~Pelletier, ``{AMC}
  and {HARQ}: How to increase the throughput,'' \emph{IEEE Trans. Commun.},
  vol.~66, no.~7, pp. 3136--3150, Feb. 2018.

\bibitem{mheich2020performance}
Z.~Mheich, W.~Yu, P.~Xiao, A.~U. Quddus, and A.~Maaref, ``On the performance of
  {HARQ} protocols with blanking in {NOMA} systems,'' \emph{IEEE Trans.
  Wireless Commun.}, vol.~19, no.~11, pp. 7423--7438, Nov. 2020.

\bibitem{nadeem2021nonorthogonal}
F.~Nadeem, M.~Shirvanimoghaddam, Y.~Li, and B.~Vucetic, ``Nonorthogonal {HARQ}
  for {URLLC}: Design and analysis,'' \emph{IEEE Internet of Things J.},
  vol.~8, no.~24, pp. 17\,596--17\,610, May 2021.

\bibitem{luo2023reinforcement}
Q.~Luo, Z.~Mheich, G.~Chen, P.~Xiao, and Z.~Liu, ``Reinforcement learning aided
  link adaptation for downlink {NOMA} systems with channel imperfections,'' in
  \emph{Proc. IEEE Wireless Commun. and Netw. Conf. (WCNC)}.\hskip 1em plus
  0.5em minus 0.4em\relax Glasgow, United Kingdom, May 2023, pp. 1--6.

\bibitem{el2010multiple}
M.~El~Aoun, R.~Le~Bidan, X.~Lagrange, and R.~Pyndiah, ``Multiple-packet versus
  single-packet incremental redundancy strategies for type-{II} hybrid {ARQ},''
  in \emph{Proc. 6th Int. Symp. Turbo Codes Iterative Inf. Process.}\hskip 1em
  plus 0.5em minus 0.4em\relax Brest, France, Sep. 2010, pp. 226--230.

\bibitem{trillingsgaard2017generalized}
K.~F. Trillingsgaard and P.~Popovski, ``Generalized {HARQ} protocols with
  delayed channel state information and average latency constraints,''
  \emph{IEEE Trans. Inf. Theory.}, vol.~64, no.~2, pp. 1262--1280, Jan. 2017.

\bibitem{liang2018efficient}
H.~Liang, A.~Liu, Y.~Zhang, and X.~Liang, ``Efficient design of multi-packet
  hybrid {ARQ} transmission scheme based on polar codes,'' \emph{IEEE Access},
  vol.~6, pp. 31\,564--31\,570, Jun. 2018.

\bibitem{jin2021mppp}
S.~Jin, Z.~Zhang, Y.~Shang, and J.~Wang, ``{MPPP-HARQ}: A {HARQ} scheme with
  multi-packet retransmission and packet-wise polarization,'' in \emph{Proc.
  IEEE 94th Veh. Technol. Conf. (VTC)}.\hskip 1em plus 0.5em minus 0.4em\relax
  Norman, OK, USA, Dec. 2021, pp. 1--5.

\bibitem{Yaoyue2023polar}
H.~Yaoyue, Y.~Xiaoxi, P.~Zhiwen, G.~Y. Liang, and Y.~Xiaohu, ``Polar-coded
  cross-packet harq based on polarizing matrix extension,'' \emph{IEEE Commun.
  Lett.}, vol.~27, no.~8, pp. 1954--1958, Jun. 2023.

\bibitem{wang2023new}
Q.~Wang, L.~Chen, and X.~Ma, ``A new {HARQ} scheme for {5G} systems via
  interleaved superposition retransmission,'' \emph{China Commun.}, vol.~20,
  no.~4, pp. 1--11, Apr. 2023.

\bibitem{wuda2023deep}
D.~Wu, J.~Feng, Z.~Shi, H.~Lei, G.~Yang, and S.~Ma, ``Deep reinforcement
  learning empowered rate selection of {XP-HARQ},'' \emph{IEEE Commun. Lett.},
  vol.~27, no.~9, pp. 2363--2367, Jul. 2023.

\bibitem{feng2022outage}
J.~Feng, Z.~Shi, G.~Yang, N.~I. Miridakis, S.~Ma, and T.~A. Tsiftsis, ``Outage
  performance of cross-packet {HARQ},'' \emph{IEEE Wireless Commun. Lett.},
  vol.~11, no.~7, pp. 1423--1427, May 2022.

\bibitem{caire2001throughput}
G.~Caire and D.~Tuninetti, ``The throughput of hybrid-{ARQ} protocols for the
  {G}aussian collision channel,'' \emph{IEEE Trans. Inf. Theory}, vol.~47,
  no.~5, pp. 1971--1988, Jul. 2001.

\bibitem{larsson2014throughput}
P.~Larsson, L.~K. Rasmussen, and M.~Skoglund, ``Throughput analysis of {ARQ}
  schemes in gaussian block fading channels,'' \emph{IEEE Trans. Commun.},
  vol.~62, no.~7, pp. 2569--2588, May 2014.

\bibitem{yilmaz2009productshifted}
F.~Yilmaz and M.-S. Alouini, ``Product of shifted exponential variates and
  outage capacity of multicarrier systems,'' in \emph{Proc. Eur. Wireless Conf.
  (EW'09)}.\hskip 1em plus 0.5em minus 0.4em\relax Aalborg, Denmark, May 2009,
  pp. 282--286.

\bibitem{shen2018fractional}
K.~Shen and W.~Yu, ``Fractional programming for communication systems—part
  {I}: Power control and beamforming,'' \emph{IEEE Trans. Signal Process.},
  vol.~66, no.~10, pp. 2616--2630, Jun. 2018.

\bibitem{J.C.Bezdek2022}
J.~C. Bezdek and R.~J. Hathaway, ``Some notes on alternating optimization,'' in
  \emph{Proc. Conf. Adv. Soft Comput.}\hskip 1em plus 0.5em minus 0.4em\relax
  Springer, Jan. 2002, pp. 288--300.

\bibitem{8319455}
Z.~Shi, S.~Ma, G.~Yang, and M.-S. Alouini, ``Energy-efficient optimization for
  {HARQ} schemes over time-correlated fading channels,'' \emph{IEEE Trans. Veh.
  Technol.}, vol.~67, no.~6, pp. 4939--4953, Mar. 2018.

\bibitem{zorzi1996use}
M.~Zorzi and R.~R. Rao, ``On the use of renewal theory in the analysis of {ARQ}
  protocols,'' \emph{IEEE Trans. Commun.}, vol.~44, no.~9, pp. 1077--1081, Sep.
  1996.

\bibitem{el2011network}
A.~El~Gamal and Y.-H. Kim, \emph{Network information theory}.\hskip 1em plus
  0.5em minus 0.4em\relax Cambridge university press, 2011.

\bibitem{cover1999elements}
T.~M. Cover, \emph{Elements of information theory}.\hskip 1em plus 0.5em minus
  0.4em\relax John Wiley \& Sons, 1999.

\bibitem{mathai2009h}
A.~Mathai, R.~K. Saxena, and H.~J. Haubold, \emph{The H-function.}\hskip 1em
  plus 0.5em minus 0.4em\relax New York, NY, USA: Springer, 2009.

\bibitem{gradshteyn1965table}
I.~S. Gradshteyn, I.~M. Ryzhik, A.~Jeffrey, D.~Zwillinger, and S.~Technica,
  \emph{Table of integrals, series, and products}, 7th~ed.\hskip 1em plus 0.5em
  minus 0.4em\relax New York, NY, USA: Academic, 2007.

\bibitem{olver2010nist}
F.~W. Olver, \emph{NIST handbook of mathematical functions hardback and
  CD-ROM}.\hskip 1em plus 0.5em minus 0.4em\relax Cambridge university press,
  2010.

\end{thebibliography}
\end{document}